\DeclareMathOperator{\Tr}{Tr}
\DeclareMathOperator{\tr}{tr}
\def\Wg{\mathrm{Wg}}
\newtheorem{prop}{Proposition}[section]
\newtheorem{coro}{Corollary}[section]
\theoremstyle{plain}
\newtheorem{defn}{Definition}[section]
\newtheorem{theo}{Theorem}[section]
\newtheorem{lem}{Lemma}[section]
\newtheorem{remark}{Remark}[section]
\newtheorem{Example}{Example}[section]
\title{ Compound Wishart Matrices and Noisy Covariance Matrices: Risk Underestimation}
\author{Beno\^it Collins, David McDonald, and Nadia Saad}
\keywords{Orthogonally invariant random matrices, Compound Wishart Matrices, Weingarten function, Markowitz problem, Risk Management}
\begin{document}

\maketitle

\begin{abstract}

In this paper, we obtain a property of the expectation
of the inverse of compound Wishart matrices which results from
their orthogonal invariance. Using this property as well as results from
random matrix theory (RMT), we derive the asymptotic effect of the noise induced by estimating
the covariance matrix on computing the risk of the optimal portfolio.
This in turn enables us to get an asymptotically unbiased estimator of the
risk of the optimal portfolio not only for the case of independent observations
but also in the case of correlated observations. This improvement
provides a new approach to estimate the risk of a portfolio based on covariance
matrices estimated from exponentially weighted moving averages
of stock returns.
\end{abstract}

\section{Introduction}

In practical situations in portfolio management, neither the expectation of the returns of the assets, nor the covariance matrix is known and we always
deal with estimators. Since the estimators depend only on a finite number of observations, estimating the parameters of the portfolio produces a
noise and as the size of the portfolio increases, the noise increases.

Here, we will focus on the noise induced by estimating the covariance matrix. Covariance matrices have a fundamental role in the theory of portfolio
optimization as well as in the risk management. The concept of financial risk attempts to quantify the uncertainty of the outcome
of an investment and hence the magnitude of possible losses.

In finance, the optimal portfolio is defined as that portfolio which provides the minimum risk for a certain level of return. A portfolio's risk is the
possibility that an investment portfolio may not achieve its objectives.
Markowitz defined the risk of the optimal portfolio as the standard deviation of the return on the portfolio of assets. So, to determine the weights and the risk
of the optimal portfolio, we essentially need to estimate the covariance matrix of the returns.
To estimate the covariance matrix for a set of $n$ different assets, we need to determine $n(n+1)/2$ entries from $n$ time series of length $T$.

Throughout the paper, $n$ denotes the number of the assets of the portfolio and $T$ denotes the number of observations. If $T$ is not very large
compared to $n$, which is the common situation in the real life, one should expect that the determination of the covariances is noisy.
In \cite{laloux}, Laloux et al. show that the empirical correlation matrices deduced from financial return series contain a high amount of noise. Except
for a few large eigenvalues and the corresponding eigenvectors, the structure of the empirical correlation matrices can be
regarded as random. Unfortunately, this implies large error in estimating the risk of the optimal portfolio. Hence, Laloux et al. \cite{laloux}
conclude that ``Markowitz's portfolio optimization scheme based on a purely historical determination of the covariance matrix is inadequate". Improving
the estimation of the risk of the optimal portfolio was an essential aim for many scientists (see \cite{laloux}, \cite{plerou}, \cite{pafka-kondor},
\cite{elkaroui}).

In (\cite{laloux}, \cite{plerou}), it is found that the risk level of an optimized portfolio could be improved if prior to optimization, one
gets rid of the lower part of the eigenvalue spectrum of the covariance matrix which coincides with the eigenvalue spectrum of the ``noisy" random
matrix.

For the model of normal returns, Pafka et al. \cite{pafka-kondor} and El Karoui \cite{elkaroui}, were able to compute the asymptotic effect of the noise induced from using the maximum likelihood estimator (MLE) of the covariance matrix on estimating the risk of the optimal portfolio as the ratio $\frac{1}{\sqrt{1-n/T}}$.

In our work, we deal with a more general estimator $\widehat{\Sigma}$ of the covariance matrix $\Sigma$ that encompasses and generalizes the MLE covariance.
Our aim
is to measure the effect of the noise induced by estimating the covariance matrix not only for the independent observations but also for the correlated
observations. We essentially rely on the techniques of random matrix theory (RMT) to quantify the asymptotic effect of the noise resulting from
estimating the covariance matrix on predicting the risk of the optimal portfolio. Using this asymptotic result, simulations show that we are able to
provide an unbiased estimator of the risk of the optimal portfolio. In the case of independent observations, our result agrees with that of Pafka et
al. \cite{pafka-kondor}.

The paper is divided into four parts. In Section \ref{sec:preliminary}, we explain the optimal portfolio problem and give the notation used throughout
the paper. We also represent some tools and techniques of random matrix theory which will be essential to prove our results. Section \ref{sec:main
results} contains our main results with proofs. In Section \ref{sec:applications}, we give some applications and simulations of our result. As an applications, we obtain the impact of the noise induced by estimating the covariance matrix for the exponentially weighted moving
average (EWMA) covariance matrix. In Section \ref{sec:real data}, there will be simulations relying on real data.

\section{Preliminary}\label{sec:preliminary}

\subsection{Modern Portfolio Theory (MPT)}

MPT is a theory in finance which attempts to maximize the portfolio expected return for a given amount of portfolio risk, or alternatively to minimize
risk for a given level of expected return, by choosing the relevant weights of various assets. MPT is also considered as a mathematical form of the
concept of investment diversification with the purpose of selecting a portfolio of investment assets that has collectively
a lower risk than any individual asset.

MPT models an asset's \emph{return} as a normally distributed random variable, defines the \emph{risk} as the standard deviation of return and
models a portfolio as a weighted combination of assets so that the \emph{return} of a portfolio is the weighted combination of the assets' returns. For
a portfolio $P$ of $n$ assets, the portfolio's expected return $\mu_P$ is defined as:
\begin{equation*}
\mu_P=\sum_{i=1}^n \omega_i \mu_i
\end{equation*}
where $\omega_i (i=1,2,\dots,n)$ \ is the \emph{amount of capital invested} in the asset $i$, and
       $\{\mu_i\}$ are the \emph{mean returns} of the individual assets.

Markowitz quantified the concept of risk using the well-known statistical measures of variance and covariance as shown in \cite{markowitz}.
So, the risk $\sigma_{P}$ on the portfolio can be associated to the total variance
\begin{equation*}
\sigma_{P}^2= \sum_{i,j=1}^n \omega_i \sigma_{ij} \omega_j
\end{equation*}
where $\Sigma=(\sigma_{ij})_{i,j=1}^n$ is the covariance matrix of the returns.

The goal of portfolio optimization is to find a combination of assets $\{\omega_i\}$ that minimizes the risk of the portfolio for any given level of
expected return or, in other words, a combination of assets that maximizes the expected return of the portfolio for any given level of risk. One way to formulate this
optimization problem mathematically is the following quadratic program:

\begin{equation}\label{mean-variance optimization problem}
\left\{
\begin{array}{l}
  \min \mathbf{w}^{t} \Sigma \mathbf{w}\\
  \mathbf{w}^{t}\mathbf{\mu}=\alpha, \quad \mathbf{w}^{t}\mathbf{e}=1 \\ \end{array} \right.
\end{equation}
where $\mathbf{w}^{t}=(\omega_1, \omega_2, \dots, \omega_n)$ is the transpose of the $n-$dimensional vector $\mathbf{w}$ of the optimal weights, $\mu$ is the $n-$dimensional vector whose $i-$th entry is $\mu_i$, $\alpha$ denotes the required expected reward and $\mathbf{e}$ is the $n-$dimensional vector with $1$ in each entry.

In practice, $\Sigma$ and $\mu$ are unknown and we deal with estimators denoted by
$\widehat{\Sigma}$ and $\widehat{\mu}$, respectively. Since, in our study, we focus on the noise induced by estimating the covariance matrix and its
effect on measuring the risk. We will consider the following simplified version of the portfolio optimization problem in which we deal with risky assets; i.e. none of the assets has zero variance and the covariance matrix is non-singular:

\begin{equation} \label{simple optimization problem}
\left\{
\begin{array}{l}
  \min \mathbf{w}^{t} \Sigma \mathbf{w} \\
  \mathbf{w}^{t}\mathbf{e}=1. \\ \end{array} \right.
\end{equation}

Using the method of Lagrange multipliers, the weights of the optimal portfolio are given by:
\begin{equation} \label{weights}
\omega_i=\frac{\sum\limits_{j=1}^n \sigma_{ij}^{(-1)}}{\sum\limits_{j,k=1}^n \sigma_{jk}^{(-1)}} \quad \quad (i=1, \dots, n).
\end{equation}
where, $\Sigma^{-1}=(\sigma_{ij}^{(-1)})_{i,j=1}^n$ is the inverse of the covariance matrix $\Sigma$.

Using \eqref{weights}, the risk $\sigma_P$ can be expressed in terms of the entries of $\Sigma^{-1}$ as follows:
\begin{equation} \label{optrisk}
\sigma_P =  \frac{1}{\sqrt{ \sum\limits_{i,j=1}^n \sigma_{ij}^{(-1)}}}.
\end{equation}

\subsection{Definition of the Problem}

Since we deal with an estimator of the covariance matrix instead of $\Sigma$ itself,
then
for a portfolio with $n$ assets and time series of financial
observations of the returns of length $T$, we can define two kinds of risks;
one using $\Sigma$
that we call the
\emph{True} risk, where
\begin{equation}
\text{True risk}=\sqrt{ \mathbf{w}^{t}\Sigma \mathbf{w}},
\end{equation}
with $\mathbf{w}$ denoting the vector of the optimal weights determined by using the entries of $\Sigma^{-1}$. The other kind of risk depends on
$\widehat{\Sigma}$ and is called the \emph{Predicted} risk, where
\begin{equation}
\text{Predicted risk}=\sqrt{\hat{\mathbf{w}}^t \widehat{\Sigma} \hat{\mathbf{w}}},
\end{equation}
with $\hat{\mathbf{w}}$ denoting the vector of the optimal weights determined by using the entries of $\widehat{\Sigma}^{-1}$.
\begin{remark}
Note that, in practice, only the \emph{Predicted} risk can be computed while the \emph{True} risk is unknown.
\end{remark}
Let
\begin{equation}\label{def of Q}
Q=\frac{(True \ risk)^2}{(Predicted \ risk)^2}
\end{equation}
Our goal is to have the ratio $Q$ in \eqref{def of Q} as close as possible to one. By \eqref{optrisk}, we can write
\begin{equation} \label{Q}
Q=\frac{\sum\limits_{i,j=1}^n \hat{\sigma}^{(-1)}_{ij}}{\sum\limits_{i,j=1}^n \sigma_{ij}^{(-1)}}
\end{equation}
Clearly, this ratio is close to one as the sample size $T$ tends to infinity while $n$ remains fixed. Using results from random matrices, we are going
to consider cases where $T$ and $n$ tend to infinity and $T>n+3$. We aim to derive a deterministic bias factor which can be used to correct the above
predicted risk.

\subsection{Notation}

For an $n \times n$ matrix $M$, we denote by $\Tr(M)$ the trace of the matrix $M$ and by $\tr(M)$ the normalized trace of the matrix i.e.,
$$\tr(M)=\frac{1}{n}\Tr(M).$$

For a positive integer $k$, $[2k]=\{1,2,\dots,2k\}$. Let $S_{2k}$ be the symmetric group acting on the set $[2k]$. For $\sigma \in S_{2k}$, we attach an
undirected graph $\Gamma(\sigma)$ with vertices
$1,2,\dots, 2k$ and edge set consisting of
$$
\big\{ \{2i-1,2i\} \ | \ i =1,2,\dots,k \big\} \cup
\big\{ \{\sigma(2i-1),\sigma(2i)\} \ | \ i =1,2,\dots,k \big\}.
$$
\begin{Example}\label{example1}
Let
$\sigma= \left(\begin{array}{cccccccc} 1 & 2 & 3 & 4 & 5 & 6 & 7 & 8 \\
2 & 5 & 4 &3 &1 & 8 & 7 & 6 \end{array}\right) \in S_8$. Then the associated graph $\Gamma(\sigma)$ is illustrated as shown in Figure $(1)$.
\end{Example}
Note that we distinguish every edge $\{2i-1,2i\}$ from $\{\sigma(2i-1),\sigma(2i)\}$
even if these pairs coincide.
Then each vertex of the graph lies on exactly two edges, and
the number of vertices in each connected component is even.
If the numbers of vertices are
$2\eta_1 \ge 2\eta_2 \ge \dots \ge 2\eta_l$ in the connected components of the graph,
then we will refer to the sequence $\eta=(\eta_1,\eta_2,\dots,\eta_l)$ as the coset-type of $\sigma$, see \cite[VII.2]{Macdonald}
for details.
Denote by $\kappa(\sigma)$ the length of the coset-type of $\sigma$, or equivalently
the number of the connected components of $\Gamma(\sigma)$.

Let $M_{2k}$ be the set of all pair partitions of the set $[2k]$.
A pair partition $\pi \in M_{2k}$ can be uniquely expressed in the form
$$
\pi= \big\{ \{ \pi(1), \pi(2)\},\{ \pi(3),\pi(4)\}, \dots,
\{\pi(2k-1), \pi(2k)\} \big\}
$$
with $1=\pi(1)<\pi(3)<\cdots< \pi(2k-1)$ and $\pi(2i-1)<\pi(2i)$ $(1 \le i \le k)$.
Then $\pi$ can be regarded as a permutation

$$
\left( \begin{array}{cccc} 1 & 2 & \dots & 2k \\
\pi(1) & \pi(2) & \dots & \pi(2k) \end{array}\right) \in S_{2k}.
$$

We thus embed $M_{2k}$ into $S_{2k}$.
In particular, the coset-type and the value of $\kappa$ for $\pi \in M_{2k}$ are defined.
\begin{figure} \label{fig1}
\centering
\includegraphics[trim = 40mm 130mm 40mm 110mm,height=22.8mm,width=65mm]{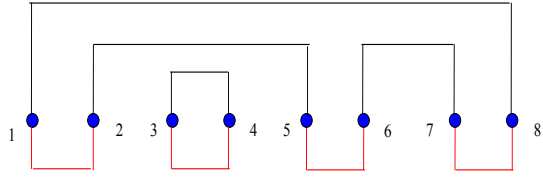}
\caption{\scriptsize{$\Gamma(\sigma)$}}
\end{figure}
For a permutation $\sigma \in S_{2k}$ and
a $2k$-tuple $\mathbf{i}=(i_1, i_2,\dots, i_{2k})$ of positive integers,
we define
$$
\delta_{\sigma}(\mathbf{i})= \prod_{s=1}^{k} \delta_{i_{\sigma(2s-1)} i_{\sigma(2s)}}.
$$
In particular, if $\sigma \in M_{2k}$, then $\delta_{\sigma}(\mathbf{i})= \prod_{\{a,b\} \in \sigma}
\delta_{i_a i_b}$,
where the product runs over all pairs in $\sigma$.
For a square matrix $A$ and $\sigma \in S_{2k}$ with coset-type $(\eta_1, \eta_2, \dots, \eta_l)$, we define
$$
\Tr_{\sigma}(A)=\prod_{j=1}^l \Tr (A^{\eta_j})
$$
\textbf{Example (continued).}
For $\sigma \in S_{8}$ defined as in Example \ref{example1}, the coset-type of $\sigma$ is $(3,1)$. This implies that
$\kappa(\sigma)=2$ and
$\Tr_\sigma(A)=\Tr (A^3)\Tr(A)$ for a square matrix $A$.

In the remaining part of this section, we present some important tools and results from random matrix theory which will play a fundamental role in
our work.

\subsection{Orthogonal Weingarten function}

For the convenience of the reader, we supply a quick review of orthogonal Weingarten calculus. For more details, we refer to
\cite{CS06,collins-matsumoto, matsumoto}.

Let $H_k$ be the hyperoctahedral group of order $2^k k!$. It is the subgroup of $S_{2k}$ generated by transpositions $(2s-1 \rightarrow 2s) (1 \leq
s \leq k)$ and by double transpositions $(2i-1 \rightarrow 2j-1)(2i\rightarrow 2j) (1 \leq i < j \leq k)$.

Let $L(S_{2k})$ be the algebra of complex-valued functions on $S_{2k}$ with the convolution. Let $L(S_{2k},H_k)$ be the subspace of all
$H_k$-biinvariant functions in $L(S_{2k})$ i.e.,
$$
L(S_{2k},H_k)= \{f \in L(S_{2k}) \ | \ f(\zeta \sigma)=f(\sigma \zeta)=f(\sigma), (\sigma \in S_{2k}, \ \zeta \in H_k)\}.
$$

We introduce another product on $L(S_{2k},H_k)$ which will be convenient in the present context.
For $f_1,f_2 \in L(S_{2k},H_k)$, we define
$$
(f_1 \sharp f_2)(\sigma)= \sum_{\tau \in M_{2k}} f_1(\sigma \tau) f_2(\tau^{-1})
\qquad (\sigma \in S_{2k}).
$$

We note that $L(S_{2k},H_k)$ is a commutative algebra under the product $\sharp$ with the
identity element
$$
\mathbf{1}_{H_k}(\sigma)= \begin{cases}1 & \text{if $\sigma \in H_k$} \\
0 & \text{otherwise}.
\end{cases}
$$

Consider the  function $z^{\kappa(\cdot)}$ with a complex parameter $z$ defined by
$$
S_{2k} \ni \sigma \mapsto z^{\kappa(\sigma)} \in \mathbb{C},
$$
which belongs to $L(S_{2k},H_k)$. For $\sigma \in S_{2k}$,
the \emph{orthogonal Weingarten function}
$\Wg^{\mathrm{O}} (\sigma;z)$
is the unique element in $L(S_{2k},H_k)$ satisfying
$$
z^{\kappa(\cdot)} \sharp \Wg^{\mathrm{O}}(\cdot;z) \sharp z^{\kappa(\cdot)} = z^{\kappa(\cdot)}
\quad \text{and} \quad
\Wg^{\mathrm{O}}(\cdot ;z) \sharp z^{\kappa(\cdot)} \sharp \Wg^{\mathrm{O}}(\cdot ;z)=
\Wg^{\mathrm{O}}(\cdot ;z).
$$

Let $\mathrm{O}(n)$ be the real orthogonal group of degree $n$,
equipped with its Haar probability measure. In \cite{collins-matsumoto}, Collins and Matsumoto formulated the local moments of the Haar orthogonal
random matrices in terms of the orthogonal Weingarten functions as shown in the following proposition.

\begin{prop}\cite{collins-matsumoto} \label{prop:OrthogonalWeingartenCalculus}
Let $O=(o_{ij})_{1 \le i,j \le n}$ be an $n \times n$ Haar-distributed orthogonal matrix.
For two sequences $\mathbf{i}=(i_1,\dots,i_{2k})$ and $\mathbf{j}=(j_1,\dots,j_{2k})$, we have
\begin{equation}
E[o_{i_1 j_1} o_{i_2 j_2} \cdots o_{i_{2k} j_{2k}}] = \sum_{\sigma,\tau \in M_{2k}}
\delta_\sigma(\mathbf{i}) \delta_{\tau}(\mathbf{j}) \Wg^{\mathrm{O}}(\sigma^{-1} \tau;n).
\end{equation}
\end{prop}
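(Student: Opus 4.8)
The plan is to identify the expectation tensor with the orthogonal projection onto the space of $\mathrm{O}(n)$-invariant tensors, and then to read off its matrix entries using the first fundamental theorem (FFT) of invariant theory together with the pseudoinverse description of the Weingarten function. First I would observe that the array $E[O^{\otimes 2k}]$, whose $(\mathbf{i},\mathbf{j})$ entry is exactly $E[o_{i_1 j_1}\cdots o_{i_{2k}j_{2k}}]$, is the average over $\mathrm{O}(n)$ of the representation $O \mapsto O^{\otimes 2k}$ on $(\mathbb{R}^n)^{\otimes 2k}$. By the standard averaging argument this average equals the orthogonal projection $P$ onto the invariant subspace $V := \big((\mathbb{R}^n)^{\otimes 2k}\big)^{\mathrm{O}(n)}$ (the odd moments vanish by the symmetry $O \mapsto -O$, which is why only even degrees occur). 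Thus the problem reduces to computing the entries of $P$ in the standard tensor basis.

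Next I would invoke the FFT for $\mathrm{O}(n)$: the invariant space $V$ is spanned by the pairing vectors $v_\sigma$, indexed by $\sigma \in M_{2k}$, whose coordinates are $(v_\sigma)_{\mathbf{i}} = \delta_\sigma(\mathbf{i})$. These vectors are generally not orthonormal and, when $n < 2k$, not even linearly independent; their Gram matrix is computed by a loop count, $\langle v_\sigma, v_\tau\rangle = \sum_{\mathbf{i}} \delta_\sigma(\mathbf{i})\delta_\tau(\mathbf{i}) = n^{\kappa(\sigma^{-1}\tau)}$, since overlaying the two pair partitions $\sigma$ and $\tau$ yields a disjoint union of cycles and each cycle contributes one free index ranging over $\{1,\dots,n\}$. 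Writing $G$ for this Gram matrix indexed by $M_{2k}$, the orthogonal projection onto $\mathrm{span}\{v_\sigma\}$ is $P = \sum_{\sigma,\tau} (G^{+})_{\sigma\tau}\, v_\sigma v_\tau^{\mathsf{T}}$, where $G^{+}$ is the Moore--Penrose pseudoinverse. Reading off the $(\mathbf{i},\mathbf{j})$ entry gives $E[o_{i_1 j_1}\cdots o_{i_{2k}j_{2k}}] = \sum_{\sigma,\tau\in M_{2k}} (G^{+})_{\sigma\tau}\,\delta_\sigma(\mathbf{i})\delta_\tau(\mathbf{j})$, which already has the shape of the claimed formula.

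It then remains to identify $(G^{+})_{\sigma\tau}$ with $\Wg^{\mathrm{O}}(\sigma^{-1}\tau;n)$. Here I would use the dictionary between $H_k$-biinvariant functions and matrices indexed by $M_{2k}$: a biinvariant $f$ corresponds to the matrix $\big(f(\sigma^{-1}\tau)\big)_{\sigma,\tau}$, and the product $\sharp$ corresponds exactly to matrix multiplication. Under this dictionary $z^{\kappa(\cdot)}$ at $z=n$ becomes the Gram matrix $G$, and the two defining relations of the Weingarten function, $z^{\kappa} \sharp \Wg^{\mathrm{O}} \sharp z^{\kappa} = z^{\kappa}$ and $\Wg^{\mathrm{O}} \sharp z^{\kappa} \sharp \Wg^{\mathrm{O}} = \Wg^{\mathrm{O}}$, become the pseudoinverse identities $GWG = G$ and $WGW = W$ for the matrix $W$ of $\Wg^{\mathrm{O}}$. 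Because both $G$ and $W$ are symmetric (the coset-type is invariant under inversion) and commute (the algebra $L(S_{2k},H_k)$ is commutative under $\sharp$), the products $GW$ and $WG$ are symmetric as well, so all four Moore--Penrose conditions hold and $W = G^{+}$ is forced. Substituting $(G^{+})_{\sigma\tau} = \Wg^{\mathrm{O}}(\sigma^{-1}\tau;n)$ into the projection formula yields the proposition.

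I expect the main obstacle to be the bookkeeping that makes the abstract algebra $L(S_{2k},H_k)$ match the concrete linear algebra: verifying that the $\sharp$-product is genuinely matrix multiplication under the $M_{2k}$-indexing, and that the loop-counting identity $\langle v_\sigma,v_\tau\rangle = n^{\kappa(\sigma^{-1}\tau)}$ is consistent with the paper's definition of $\kappa$ through the coset-type. The degenerate regime $n < 2k$, where the $v_\sigma$ are dependent and one must pass from an inverse to a pseudoinverse, is the one place where the two defining relations of $\Wg^{\mathrm{O}}$ (rather than a naive matrix inverse of $G$) are genuinely needed.
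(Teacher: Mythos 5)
The paper offers no proof of this proposition: it is stated as a quotation of a known result from \cite{collins-matsumoto}, so there is no in-paper argument to compare yours against. That said, your proposal is correct and is essentially the standard proof of the cited result: $E[O^{\otimes 2k}]$ is the orthogonal projection onto the $\mathrm{O}(n)$-invariant tensors; those invariants are spanned by the pairing vectors $v_\sigma$ by the first fundamental theorem; their Gram matrix is $\bigl(n^{\kappa(\sigma^{-1}\tau)}\bigr)_{\sigma,\tau\in M_{2k}}$, consistent with the paper's $\kappa$ because the overlay graph of the pairings $\sigma,\tau$ is exactly $\Gamma(\sigma^{-1}\tau)$ after relabelling the vertices by $\sigma^{-1}$; and the dictionary $f\mapsto\bigl(f(\sigma^{-1}\tau)\bigr)_{\sigma,\tau\in M_{2k}}$ does send $\sharp$ to matrix multiplication (unfold each sum over $M_{2k}$ into a sum over $S_{2k}$ divided by $|H_k|$, using $H_k$-biinvariance and the fact that $M_{2k}$ is a transversal of the left cosets of $H_k$), so the two defining relations of $\Wg^{\mathrm{O}}(\cdot\,;n)$ plus symmetry and commutativity give all four Moore--Penrose conditions and force the Weingarten matrix to equal the pseudoinverse of the Gram matrix, exactly as you argue.
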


In \cite{collins-matsumoto-Saad}, the function $\Wg^{\mathrm{O}}(\cdot;z,w)\in L(S_{2k},H_k)$ was introduced
with two complex parameters $z,w$ as follows
\begin{equation}  \label{eq:double-orthogonalWg}
\Wg^{\mathrm{O}}(\cdot ;z,w) = \Wg^{\mathrm{O}}(\cdot;z) \sharp \Wg^{\mathrm{O}}(\cdot;w).
\end{equation}
The function $\Wg^{\mathrm{O}}(\cdot;z,w)$ is called double orthogonal Weingarten function. In \cite{collins-matsumoto-Saad}, Collins et al. computed
the local moments of the inverted compound Wishart matrices in terms of $\Wg^{\mathrm{O}}(\cdot ;z,w)$. We will recall this result in the next section.

\subsection{Compound Wishart matrices and their inverses}

Compound Wishart matrices were introduced by Speicher \cite{speicher}. They can be considered as an extension of the Wishart matrices. More precisely,
they are weighted sums of independent Wishart matrices.

Let $X$ be a $T\times n$ matrix of i.i.d. entries which are normally distributed with zero mean and unit variance i.e.,
\begin{equation}\label{X}
X=(x_{ij}) \quad  (i=1,\dots, T; \ j=1,\dots, n) \quad \text{such that} \quad x_{ij}\sim N(0,1).
\end{equation}
\begin{defn}(Real Compound Wishart matrices)
Let $\Sigma$ be an $n\times n$ positive definite matrix and $B$ be a $T\times T$ real matrix. We
say that a random matrix $W$ is a real compound Wishart matrix with shape parameter $B$ and scale parameter $\Sigma$, if
\begin{equation*}
W=\Sigma^{\frac{1}{2}}X^{t}BX \Sigma^{\frac{1}{2}}
\end{equation*}
where $\Sigma^{\frac{1}{2}}$ is the symmetric root of $\Sigma$.
We write $W \in \mathcal{W}(\Sigma,B)$.
\end{defn}

As shown in \cite{burda-jarosz-jurkiewicz-nowak-papp-zahed}, if $B$ is positive-definite, then $W$ can be interpreted as a sample covariance matrix.
In the sequel, we will consider $B$ to be positive definite. More details about the compound Wishart matrices can be found in \cite{hiai}. For a
compound Wishart matrix $W$, if $\Sigma=I_n$ we will call $W$ a \emph{white compound Wishart} matrix.

\begin{defn}(orthogonal invariance)
Let $M$ be an $n\times n$ real random matrix. $M$ is called orthogonally invariant if for each orthogonal matrix $O$, $OMO^t$ has the same distribution
as $M$. We write
$OMO^t\stackrel{\mathcal{L}}{=}M$.
\end{defn}
Note that white compound Wishart matrices are example of such orthogonal invariant matrices.

For a portfolio with $n$ assets and time series of financial observations of the returns of length $T$, we define a general estimator
$\widehat{\Sigma}$ of the covariance matrix $\Sigma$ as follows:
\begin{equation} \label{Estimator}
\widehat{\Sigma}=\frac{1}{\Tr(B)}Y^{t}BY
\end{equation}
where $Y=(y_{ij})$ is a $T \times n$ matrix whose rows are $n-$dimensional vectors of centered returns which are taken sequentially in time: $Y_1,
Y_2, \dots, Y_T$. We assume these vectors are i.i.d. with distribution $N(0,\Sigma)$ so that $y_{ij}$ is the return of the $j$-th asset at time $i$,
hence $Y \sim N(\mathbf{0}, I_T \otimes \Sigma)$ where $\otimes$ denotes the Kronecker product of matrices and $B$ is a $T \times T$ known weighting
matrix.
\begin{remark}
If $B=I_T$, the $T \times T$ identity matrix, then $\widehat{\Sigma}$ is the maximum likelihood estimator (MLE) of the covariance matrix which is
distributed as a real Wishart matrix with $T$ degrees of freedom. The real Wishart matrices were introduced by Wishart \cite{wishart} and as evidenced
by the vast literature, the Wishart law is of primary importance to statistics, see (\cite{anderson}, \cite{muirhead}).
\end{remark}
Since
\begin{equation}\label{Y}
Y\stackrel{\mathcal{L}}{=}X \Sigma^{\frac{1}{2}}.
\end{equation}
Then
\begin{equation}\label{sigma_hat_dist}
\widehat{\Sigma}\stackrel{\mathcal{L}}{=}\frac{1}{\Tr(B)}\Sigma^{\frac{1}{2}} X^tBX \Sigma^{\frac{1}{2}}.
\end{equation}
From \eqref{sigma_hat_dist},
$\widehat{\Sigma}$ is a compound Wishart matrix with scale parameter $\Sigma$ and shape parameter $B$.
In \cite{collins-matsumoto-Saad}, Collins et al. formulated the local moments of the inverted compound Wishart matrices as shown in the following
theorem:

\begin{theo}\cite{collins-matsumoto-Saad} \label{thm:real-compound-inverse}
Let $W$ be an $n \times n$ compound Wishart matrix $W \in \mathcal{W}(\Sigma,B)$ for some $T \times T$ real matrix $B$.
Let $W^{-1}=(w^{(-1)}_{ij})$ be the inverse matrix
of $W$. Put $q=T-n-1$ and suppose $n \ge k$ and $q \ge 2k-1$.
For any sequence $\mathbf{i}=(i_1,\dots,i_{2k})$, we have
$$
 E[ w^{(-1)}_{i_1 i_2} \cdots w^{(-1)}_{i_{2k-1} i_{2k}} ]
= (-1)^k \sum_{\sigma, \rho \in M_{2k}}
 \Tr_{\sigma} (B^{-}) \Wg^{\mathrm{O}}(\sigma^{-1} \rho;T,-q)
\prod_{\{u,v \} \in \rho} \sigma^{(-1)}_{i_u i_v},
$$
where $B^-$ is the pseudo inverse of the matrix $B$.
\end{theo}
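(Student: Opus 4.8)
The plan is to separate the deterministic scale matrix $\Sigma$ from the orthogonally invariant ``white'' part, and then to treat the $B$-dependence and the matrix inversion by two successive applications of orthogonal Weingarten calculus, one in dimension $n$ and one in dimension $T$. Using $\widehat{\Sigma}\stackrel{\mathcal L}{=}\Sigma^{1/2}X^tBX\,\Sigma^{1/2}$ I would write $W^{-1}=\Sigma^{-1/2}V^{-1}\Sigma^{-1/2}$ with $V=X^tBX$ the white compound Wishart. Since $\Sigma^{-1/2}$ is deterministic, expanding the product of entries and using linearity gives
$$E\Big[\prod_{s=1}^{k}w^{(-1)}_{i_{2s-1}i_{2s}}\Big]=\sum_{\mathbf p,\mathbf q}\Big(\prod_{s=1}^k(\Sigma^{-1/2})_{i_{2s-1}p_s}(\Sigma^{-1/2})_{q_si_{2s}}\Big)E\Big[\prod_{s=1}^k(V^{-1})_{p_sq_s}\Big],$$
so it remains to compute the entry moments of $V^{-1}$ and resum the deterministic factors. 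The key point is that $V$, and hence $V^{-1}$, is orthogonally invariant in the $n$-dimensional space: replacing $X$ by $XO$ with $O\in\mathrm O(n)$ leaves the law of $X$ unchanged, so $O^tVO\stackrel{\mathcal L}{=}V$.

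Next, writing $V^{-1}\stackrel{\mathcal L}{=}OV^{-1}O^t$ for an independent Haar $O\in\mathrm O(n)$, expanding $(OV^{-1}O^t)_{ab}=\sum_{c,d}o_{ac}o_{bd}(V^{-1})_{cd}$, and integrating the orthogonal entries by Proposition \ref{prop:OrthogonalWeingartenCalculus} in dimension $n$, I would obtain a sum over two pair partitions: a partition $\rho$ carrying a factor $\delta_\rho$ on the external indices, which after resummation against the $\Sigma^{-1/2}$'s becomes exactly $\prod_{\{u,v\}\in\rho}\sigma^{(-1)}_{i_u i_v}$, and a partition $\beta$ whose internal indices contract into the trace moments $E[\Tr_\beta(V^{-1})]$, the two being coupled by $\Wg^{\mathrm O}(\rho^{-1}\beta;n)$. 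This isolates the genuine problem: evaluating $E[\Tr_\beta(V^{-1})]$ for the white compound Wishart.

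The heart of the argument, and the step I expect to be the main obstacle, is the computation of these inverse trace moments. Here I would exploit that the law of $X^tBX$ is invariant under $B\mapsto O^tBO$ for $O\in\mathrm O(T)$ (because $X^t(O^tBO)X=(OX)^tB(OX)$ and $OX\stackrel{\mathcal L}{=}X$), which forces the dependence on $B$ to occur only through the conjugation invariants $\Tr(B^{-j})$, that is through $\Tr_\sigma(B^-)$, and simultaneously produces an orthogonal Weingarten factor at parameter $T$ via Proposition \ref{prop:OrthogonalWeingartenCalculus} in dimension $T$. Making this rigorous for the non-polynomial inverse is the delicate part; I would carry it out through a QR/Stiefel decomposition $X=QR$ separating the radial inverse-Wishart part from the Haar $\mathrm O(T)$ part, and then invoke the inverse-Wishart (negative-dimension) phenomenon, which is precisely what contributes the second parameter $-q=-(T-n-1)$ and the overall sign $(-1)^k$. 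The hypotheses $n\ge k$ and $q\ge 2k-1$ enter exactly here, guaranteeing that the inverse moments up to order $k$ exist and that the negative-dimension function $\Wg^{\mathrm O}(\cdot;-q)$ is nonsingular. As a consistency check I would verify the case $k=1$, where the identity reduces to $E[W^{-1}]=\tfrac{\Tr(B^{-1})}{Tq}\Sigma^{-1}$, matching the classical $E[(X^tX)^{-1}]=\tfrac1{T-n-1}I_n$ when $B=I$.

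Finally I would assemble the pieces. The factor $E[\Tr_\beta(V^{-1})]$ carries an intrinsic $n^{\kappa}$ scaling that, under $\sharp$, cancels the Weingarten $\Wg^{\mathrm O}(\cdot;n)$ produced in the second step, via the defining relation $\Wg^{\mathrm O}(\cdot;n)\sharp n^{\kappa(\cdot)}\sharp\Wg^{\mathrm O}(\cdot;n)=\Wg^{\mathrm O}(\cdot;n)$; this is why no bare dependence on $n$ survives in the answer. What remains is the convolution of the parameter-$T$ and parameter-$(-q)$ Weingartens, which by the definition \eqref{eq:double-orthogonalWg} of the double orthogonal Weingarten function collapses to $\Wg^{\mathrm O}(\sigma^{-1}\rho;T,-q)$. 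Collecting $(-1)^k$, $\Tr_\sigma(B^-)$, the double Weingarten, and $\prod_{\{u,v\}\in\rho}\sigma^{(-1)}_{i_u i_v}$ yields the stated formula. The principal difficulties are thus the negative-dimension inverse-moment computation and the bookkeeping of the three Weingarten convolutions, where one must verify that all explicit $n$-dependence cancels as required.
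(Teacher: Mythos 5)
A preliminary remark: the paper never proves this statement. Theorem \ref{thm:real-compound-inverse} is imported from \cite{collins-matsumoto-Saad} and used as a black box, so there is no internal proof to compare yours against; I therefore evaluate the proposal on its own terms.

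Your first two steps are sound: pulling out the deterministic $\Sigma^{-1/2}$, noting that $V=X^tBX$ is orthogonally invariant, conjugating by an independent Haar $O\in\mathrm{O}(n)$ and applying Proposition \ref{prop:OrthogonalWeingartenCalculus} does reduce everything to the quantities $E[\Tr_\beta(V^{-1})]$. The genuine gap is precisely the step you flag as ``the heart of the argument''. Invariance of the law of $X^tBX$ under $B\mapsto O^tBO$ only says that $E[\Tr_\beta(V^{-1})]$ is a symmetric function of the eigenvalues of $B$; the theorem needs the far stronger claim that it is a \emph{linear} combination of the products $\Tr_\sigma(B^-)$ with coefficients depending on $(T,n)$ alone, and nothing in your argument produces that structure. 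Nor can Proposition \ref{prop:OrthogonalWeingartenCalculus} be applied ``in dimension $T$'': Weingarten calculus integrates polynomials in the entries of a Haar matrix, whereas after your Stiefel/QR decomposition $X=QR$ the object $(X^tBX)^{-1}=R^{-1}(Q^tBQ)^{-1}R^{-t}$ still contains $(Q^tBQ)^{-1}$, which is not polynomial in $Q$. The inverse-Wishart/negative-dimension mechanism you invoke disposes of the $Q$-dependence only when $Q^tBQ$ is deterministic, i.e.\ when $B$ is a multiple of $I_T$ --- the classical Wishart case, which is the only case your $k=1$ consistency check actually probes.

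Moreover, the gap cannot be closed, because the identity as quoted is false for general $B$. Its $k=1$ specialization (which is also the paper's own display \eqref{part1}) reads $E[w^{(-1)}_{ij}]=\frac{1}{Tq}\Tr(B^{-1})\,\sigma^{(-1)}_{ij}$. Take $k=1$, $n=1$, $T=3$, $\Sigma=1$ and $B=\diag(2,1,1)$, so that $q=1$ and the hypotheses $n\ge k$, $q\ge 2k-1$ hold; then $W=2x_1^2+x_2^2+x_3^2$ and, using $E[W^{-1}]=\int_0^\infty E[e^{-tW}]\,dt$ and the substitution $u=\sqrt{1+4t}$,
\begin{equation*}
E[W^{-1}]=\int_0^\infty (1+4t)^{-1/2}(1+2t)^{-1}\,dt=\int_1^\infty \frac{du}{1+u^2}=\frac{\pi}{4},
\end{equation*}
whereas the right-hand side of the theorem equals $\Tr(B^{-1})/(Tq)=5/6$. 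The mismatch is not a small-$q$ artifact: for $T=5$, $B=\diag(2,1,1,1,1)$ one gets $\pi/4-1/2\approx 0.285$ against $4.5/15=0.3$. More generally, $E[(x^tBx)^{-1}]=\int_0^\infty \det(I_T+2tB)^{-1/2}\,dt$ is not a rational function of the eigenvalues of $B$, so no exact identity of the stated shape can hold. What survives is the case $B=cI_T$ (Matsumoto's inverse-Wishart theorem), together with the fact that the formula does give the correct \emph{leading-order} asymptotics as $n,T\to\infty$ with $n/T\to r<1$: by the Marchenko--Pastur/Silverstein equation, $\Tr(B)\,E[\tr((X^tBX)^{-1})]\to \frac{1}{1-r}\left(\int\lambda\, d\mu_B(\lambda)\right)\left(\int\lambda^{-1}d\mu_B(\lambda)\right)$, where $\mu_B$ is the limiting spectral distribution of $B_T$, which is consistent with Corollary \ref{asymptotic_behavior_of_Q}. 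That asymptotic statement --- not the exact finite-$(n,T)$ identity --- is what the paper's later results actually need, and proving it requires a different (random-matrix, not Weingarten) argument from the one you propose.
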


\section{Main Results} \label{sec:main results}

In order to derive a deterministic bias factor which can be used to improve the predicted risk of the optimal portfolio, we need to derive the following
property of the inverted compound Wishart matrices.
For this purpose,
we show that for a compound Wishart matrix $W$ with a scale parameter $\Sigma$ and a
shape parameter $B$, the ratio between the expected trace of $W^{-1}$ and the expected sum of its entries equals to the ratio between the trace of
$\Sigma^{-1}$ and the sum of its entries.
\begin{prop} \label{theo1}
For an $n \times n$ matrix $W \in \mathcal{W}(\Sigma,B)$,
\begin{equation*}
E(\Tr(W^{-1}))/E(\sum_{i,j=1}^n w_{ij}^{(-1)})=\Tr(\Sigma^{-1})/\sum_{i,j=1}^n \sigma_{ij}^{(-1)}.
\end{equation*}
\end{prop}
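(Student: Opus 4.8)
The plan is to exploit the orthogonal invariance of the problem by reducing everything to the case where $\Sigma$ is diagonal, and then to apply Theorem \ref{thm:real-compound-inverse} with $k=1$. First I would observe that both sides of the claimed identity depend only on $E[\Tr(W^{-1})]$ and $E[\sum_{i,j} w^{(-1)}_{ij}]$, and that these are exactly the two quantities one obtains by specializing the $2k$-th moment formula to $k=1$. Indeed, for $k=1$ the sequence is $\mathbf{i}=(i_1,i_2)$, the pair partitions in $M_2$ reduce to the single partition $\{\{1,2\}\}$, and the formula gives $E[w^{(-1)}_{i_1 i_2}] = -\Tr(B^{-})\,\Wg^{\mathrm{O}}(e;T,-q)\,\sigma^{(-1)}_{i_1 i_2}$, where $e$ is the identity permutation. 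The key structural point is that $E[w^{(-1)}_{i_1 i_2}]$ is a scalar multiple of $\sigma^{(-1)}_{i_1 i_2}$, with the \emph{same} scalar $c := -\Tr(B^{-})\,\Wg^{\mathrm{O}}(e;T,-q)$ for every index pair $(i_1,i_2)$.

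Granting this, the proof is immediate: summing over $i_1=i_2=i$ gives $E[\Tr(W^{-1})] = c\,\Tr(\Sigma^{-1})$, while summing over all $i_1,i_2$ gives $E[\sum_{i,j} w^{(-1)}_{ij}] = c\,\sum_{i,j}\sigma^{(-1)}_{ij}$. Taking the ratio cancels the common factor $c$ and yields exactly
\begin{equation*}
\frac{E[\Tr(W^{-1})]}{E[\sum_{i,j=1}^n w^{(-1)}_{ij}]} = \frac{\Tr(\Sigma^{-1})}{\sum_{i,j=1}^n \sigma^{(-1)}_{ij}},
\end{equation*}
which is the assertion. The cancellation of $c$ is what makes the dependence on $B$ (through $\Tr(B^{-})$ and the Weingarten factor) disappear, so the result holds for any admissible shape parameter $B$ and requires no knowledge of $c$ itself; one only needs $c \neq 0$ and the denominator $\sum_{i,j}\sigma^{(-1)}_{ij}$ to be nonzero, which holds since $\Sigma^{-1}$ is positive definite.

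The main thing to verify carefully is the hypothesis under which the $k=1$ case of Theorem \ref{thm:real-compound-inverse} applies: one needs $n \ge 1$ and $q = T-n-1 \ge 1$, i.e. $T \ge n+2$, which is consistent with the standing assumption $T>n+3$ made in the introduction. An alternative and perhaps cleaner route, which I would mention as a sanity check, is to bypass the Weingarten formula entirely and argue directly from orthogonal invariance: since $W \stackrel{\mathcal{L}}{=} O W O^{t}$ fails in general (because $\Sigma$ is not a multiple of the identity), one instead writes $W = \Sigma^{1/2} \tilde{W} \Sigma^{1/2}$ with $\tilde W$ a white compound Wishart matrix, uses $W^{-1} = \Sigma^{-1/2}\tilde W^{-1}\Sigma^{-1/2}$, and invokes the orthogonal invariance of $\tilde W^{-1}$ to conclude that $E[\tilde W^{-1}]$ is a scalar multiple of the identity. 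Conjugating back by $\Sigma^{-1/2}$ then shows $E[W^{-1}] = c\,\Sigma^{-1}$ for a scalar $c$, from which the entrywise proportionality $E[w^{(-1)}_{ij}] = c\,\sigma^{(-1)}_{ij}$ follows directly; the rest of the argument is as above. I expect the only genuine obstacle to be the careful extraction of the $k=1$ specialization and confirming that $E[\tilde W^{-1}]$ is scalar, both of which are routine given the tools already stated.
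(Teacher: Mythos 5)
Your proposal is correct, and in fact it contains two arguments: your primary route is genuinely different from the paper's, while your ``alternative sanity check'' \emph{is} the paper's own proof. The paper argues exactly as in your second paragraph: it sets $A=X^{t}BX$, notes that $A$ and hence $A^{-1}$ is orthogonally invariant, invokes Lemma \ref{orthoinvariance} to get $E(A^{-1})=\beta I_n$, writes $W^{-1}\stackrel{\mathcal{L}}{=}\Sigma^{-\frac{1}{2}}A^{-1}\Sigma^{-\frac{1}{2}}$, and uses cyclicity of the trace to arrive at $\Tr(\Sigma^{-1})\,E\bigl(\sum_{i,j}w^{(-1)}_{ij}\bigr)=E(\Tr(W^{-1}))\sum_{i,j}\sigma^{(-1)}_{ij}$, which is precisely your entrywise proportionality $E(W^{-1})=c\,\Sigma^{-1}$ phrased multiplicatively (thereby sidestepping any division-by-zero concern). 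Your primary route via the $k=1$ case of Theorem \ref{thm:real-compound-inverse} is also sound: $M_2$ is a singleton whose embedded permutation is the identity, with coset-type $(1)$, so $E[w^{(-1)}_{ij}]=-\Tr(B^{-})\,\Wg^{\mathrm{O}}(e;T,-q)\,\sigma^{(-1)}_{ij}=\frac{\Tr(B^{-1})}{Tq}\,\sigma^{(-1)}_{ij}$, and the common factor cancels in the ratio; indeed the paper itself carries out this very computation later, as equation \eqref{part1} in the proof of Proposition \ref{prop:var(Q)}. The trade-off: your route leans on the full strength of the cited Weingarten-calculus theorem and needs its hypotheses ($n\ge 1$, $q=T-n-1\ge 1$), but it yields the explicit constant $c=\Tr(B^{-1})/(Tq)$ for free; the paper's route is elementary and self-contained, using nothing beyond the fact that orthogonal invariance forces a scalar expectation. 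One small point you gesture at but should nail down: $c\neq 0$ follows from the paper's standing assumption that $B$ is positive definite (so $\Tr(B^{-1})>0$), and positivity of $\sum_{i,j}\sigma^{(-1)}_{ij}=\mathbf{e}^{t}\Sigma^{-1}\mathbf{e}$ handles the denominator, so the ratio form of the identity is legitimate under exactly the conditions both proofs implicitly require.
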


Before we prove this proposition we need to recall the following well-known fact:

\begin{lem} \label{orthoinvariance}
If $M$ is an $n\times n$ orthogonally invariant matrix, then\\
$(i)\quad E(M)=\beta I_n$, where $\beta$ is some scalar.\\
$(ii)\quad M^{k}$ is orthogonally invariant for each integer $k \in \mathbb{Z}$.
\end{lem}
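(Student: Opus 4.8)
The plan is to treat the two parts separately, each exploiting the defining relation $OMO^t \stackrel{\mathcal{L}}{=} M$ in a different way.

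For part $(i)$, I would take expectations in the distributional identity. Since equal-in-distribution matrices have equal expectations, and since $O$ is a fixed deterministic orthogonal matrix, $E(M) = E(OMO^t) = O\,E(M)\,O^t$ for every $O \in \mathrm{O}(n)$. Writing $A = E(M)$ and multiplying on the right by $O$, this says $AO = OA$ for all orthogonal $O$; that is, $A$ commutes with the entire orthogonal group. The remaining task is to conclude that such an $A$ must be a scalar multiple of the identity. I would do this by an elementary argument: first conjugate by the sign-flip matrices $\diag(\epsilon_1,\dots,\epsilon_n)$ with each $\epsilon_i \in \{+1,-1\}$, which are orthogonal; comparing the $(i,j)$ entry of $A = OAO^t$ forces $\epsilon_i\epsilon_j a_{ij} = a_{ij}$, so flipping a single sign yields $a_{ij} = 0$ whenever $i \neq j$, whence $A$ is diagonal. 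Then conjugating by the permutation matrix of a transposition $(i\,j)$ interchanges the diagonal entries $a_{ii}$ and $a_{jj}$, forcing all diagonal entries to coincide. Thus $A = \beta I_n$ for the common value $\beta$. Alternatively, one may invoke Schur's lemma, using that the defining representation of $\mathrm{O}(n)$ on $\mathbb{R}^n$ is irreducible.

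For part $(ii)$, the key algebraic observation is that because $O^t O = O O^t = I_n$, conjugation is compatible with taking powers: $(OMO^t)^k = O M^k O^t$ for every integer $k$, obtained by telescoping the product and using $O^{-1} = O^t$ to treat negative $k$ on the set of invertible matrices. I would then use that equality in distribution is preserved under any fixed measurable map. The power map $\phi_k \colon X \mapsto X^k$ is continuous (polynomial for $k \ge 0$, and continuous on the open set of invertible matrices for $k < 0$). Applying $\phi_k$ to both sides of $OMO^t \stackrel{\mathcal{L}}{=} M$ gives $(OMO^t)^k \stackrel{\mathcal{L}}{=} M^k$, and substituting the conjugation identity yields $O M^k O^t \stackrel{\mathcal{L}}{=} M^k$. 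Since $O$ was arbitrary, $M^k$ is orthogonally invariant.

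There is no serious obstacle here, as this is a classical fact; the only points requiring care are the two passages flagged above. In part $(i)$, the step from ``$A$ commutes with $\mathrm{O}(n)$'' to ``$A$ is scalar'' is the substantive content, which I handle by the sign-flip-plus-permutation argument (or Schur's lemma). In part $(ii)$, for $k < 0$ one must ensure $\phi_k$ is applied on a domain where it is well defined and continuous; this is harmless provided $M$ is almost surely invertible, which holds for the positive-definite compound Wishart matrices to which the lemma is later applied.
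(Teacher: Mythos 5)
Your proof is correct, but note that there is nothing in the paper to compare it against: the authors state Lemma \ref{orthoinvariance} as a ``well-known fact'' and give no proof, proceeding directly to the proof of Proposition \ref{theo1} where both parts are used. Your argument supplies the missing details in the standard way, and both halves are sound: taking expectations in $OMO^t \stackrel{\mathcal{L}}{=} M$ to get $E(M) = O\,E(M)\,O^t$, and pushing the distributional identity through the fixed measurable map $X \mapsto X^k$. Two small observations. In part $(i)$, your elementary sign-flip-plus-transposition computation is actually preferable to the parenthetical appeal to Schur's lemma: over $\mathbb{R}$, Schur's lemma only yields that the commutant of an irreducible representation is a division algebra ($\mathbb{R}$, $\mathbb{C}$, or $\mathbb{H}$), so one would additionally need absolute irreducibility of the defining representation of $\mathrm{O}(n)$ to conclude scalarity; your direct computation, which uses only signed permutation matrices, avoids this subtlety entirely (and shows the conclusion already holds under invariance by the hyperoctahedral subgroup). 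Also, part $(i)$ tacitly assumes $E(M)$ exists entrywise, an implicit hypothesis of the lemma as stated; in the paper's application this is the existence of $E(A^{-1})$ for $A = X^t B_T X$, which requires $q = T-n-1 \geq 1$. In part $(ii)$, your flagging of almost sure invertibility for negative $k$ is exactly the right care, and it does hold for the matrices $X^t B_T X$ with $T > n$ and $B_T$ positive definite to which the authors apply the lemma with $k = -1$.
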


\begin{proof}[Proof of Proposition \ref{theo1}]
Consider
\begin{equation} \label{A}
A=X^{t}BX.
\end{equation}
It is clear that $A$ is orthogonally invariant. By Lemma \ref{orthoinvariance}(ii) taking $k=-1$, $A^{-1}$ is orthogonally invariant as well and
\begin{equation} \label{orthogonal}
E(A^{-1})=\beta I_{n},
\end{equation}
for some scalar $\beta$. Another important remark is that,
\begin{eqnarray}
E(\sum_{i,j=1}^n w_{ij}^{(-1)}) &=& E(\Tr(\mathbf{e}^{t}W^{-1}\mathbf{e})) \nonumber\\
&=&\Tr(E(\mathbf{e}^{t}W^{-1}\mathbf{e})).
\end{eqnarray}
Since $W \in \mathcal{W}(\Sigma,B)$ then, $W^{-1}\stackrel{\mathcal{L}}{=}\Sigma^{-\frac{1}{2}}A^{-1} \Sigma^{-\frac{1}{2}}$ and so,
\begin{equation*}
E(\sum_{i,j=1}^n w_{ij}^{(-1)})= E(\Tr(\mathbf{e}^{t}\Sigma^{-\frac{1}{2}}A^{-1} \Sigma^{-\frac{1}{2}}\mathbf{e})).
\end{equation*}
Since $\Tr$ is invariant under cyclic permutations,
\begin{eqnarray*}
E(\sum_{i,j=1}^n w_{ij}^{(-1)})&=&E(\Tr(\Sigma^{-\frac{1}{2}}\mathbf{e}\mathbf{e}^{t}\Sigma^{-\frac{1}{2}}A^{-1})) \\
&=& \Tr(\Sigma^{-\frac{1}{2}}\mathbf{e}\mathbf{e^{t}}\Sigma^{-\frac{1}{2}}E(A^{-1})).
\end{eqnarray*}
It follows that
\begin{eqnarray*}
\Tr(\Sigma^{-1}) E(\sum_{i,j=1}^n w_{ij}^{(-1)})&=& \Tr(\Sigma^{-1})\Tr(\Sigma^{-\frac{1}{2}}\mathbf{e}\mathbf{e^{t}}\Sigma^{-\frac{1}{2}}E(A^{-1}))\\
&=&\Tr(\beta\Sigma^{-1}) \Tr(\mathbf{e^{t}}\Sigma^{-1}\mathbf{e}) \ \ \ (\text{from \eqref{orthogonal}})\\
&=&\Tr(E(A^{-1})\Sigma^{-1}) \sum_{i,j=1}^n \sigma_{ij}^{(-1)}\\
&=&\Tr(E(A^{-1}\Sigma^{-1})) \sum_{i,j=1}^n \sigma_{ij}^{(-1)}\\
&=&E(\Tr(W^{-1}))\sum_{i,j=1}^n\sigma_{ij}^{(-1)}.
\end{eqnarray*}
\end{proof}

\begin{remark}
Note that the $T \times T$ matrix $B$ depends essentially on the dimension $T$. So, from now on we will replace $B$ by $B_T$.
\end{remark}

In the following theorem, we study the asymptotic behavior of the ratio $Q$, defined in \eqref{def of Q},
which will play a great role in improving the prediction of the risk of the
optimal portfolio.

\begin{theo} \label{thm:noise}
Let $B_T$ be a $T \times T$ real matrix such that
\begin{equation}\label{B-condition}
(\tr(B_T))^2\tr(B_T^{-2})=o(T).
\end{equation}
Let $\widehat{\Sigma}$ be as defined in \eqref{Estimator}. If $T>n+3$, then as $n$ and $T$ tend to infinity such that $n/T\rightarrow r<1$, we have
\begin{equation} \label{effect}
Q- \Tr(B_T) E(\tr((X^tB_TX)^{-1})) \stackrel{\mathbf{\mathrm{P}}}{\longrightarrow}0.
\end{equation}
\end{theo}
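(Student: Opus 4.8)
The strategy is to write $Q$ as a scalar quadratic form in $A^{-1}$, where $A=X^tB_TX$, to identify its mean with the deterministic term subtracted in \eqref{effect}, and then to kill the fluctuations by a Chebyshev (second moment) argument fed by Theorem \ref{thm:real-compound-inverse}. From \eqref{sigma_hat_dist} we have $\widehat{\Sigma}^{-1}\stackrel{\mathcal{L}}{=}\Tr(B_T)\,\Sigma^{-1/2}A^{-1}\Sigma^{-1/2}$, so, setting $\mathbf{v}=\Sigma^{-1/2}\mathbf{e}$ and $\mathbf{u}=\mathbf{v}/\|\mathbf{v}\|$, formula \eqref{Q} becomes
\begin{equation*}
Q=\Tr(B_T)\,\frac{\mathbf{v}^tA^{-1}\mathbf{v}}{\mathbf{v}^t\mathbf{v}}=\Tr(B_T)\,\mathbf{u}^tA^{-1}\mathbf{u}.
\end{equation*}
As observed in the proof of Proposition \ref{theo1}, $A$ and hence $A^{-1}$ are orthogonally invariant, so Lemma \ref{orthoinvariance}$(i)$ gives $E(A^{-1})=\beta I_n$ with $\beta=E(\tr(A^{-1}))$, whence $E(Q)=\Tr(B_T)\,E(\tr((X^tB_TX)^{-1}))$ is exactly the centering in \eqref{effect}. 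Choosing a fixed orthogonal $O$ with $O\mathbf{u}=\mathbf{e}_1$ and using $OA^{-1}O^t\stackrel{\mathcal{L}}{=}A^{-1}$ yields $\mathbf{u}^tA^{-1}\mathbf{u}\stackrel{\mathcal{L}}{=}(A^{-1})_{11}$. Since the centering is constant, it suffices, by Chebyshev's inequality, to prove that $\mathrm{Var}(Q)=\Tr(B_T)^2\,\mathrm{Var}\big((A^{-1})_{11}\big)\to0$.

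Next I would compute the first two moments of $(A^{-1})_{11}$ from Theorem \ref{thm:real-compound-inverse} with $\Sigma=I_n$ (so $A^{-1}=(X^tB_TX)^{-1}$ and $B^-=B_T^{-1}$), constant index sequence $\mathbf{i}=(1,\dots,1)$, which makes every factor $\prod\sigma^{(-1)}_{i_ui_v}$ equal to $1$; the hypothesis $T>n+3$ gives $q=T-n-1\ge3$, so $k=1,2$ are admissible. For $k=1$ this gives $E((A^{-1})_{11})=\Tr(B_T^{-1})/(Tq)$. For $k=2$ one lists $M_4=\{\pi_1,\pi_2,\pi_3\}$: the identity pairing $\pi_1$ has coset-type $(1,1)$ with $\Tr_{\pi_1}(B_T^{-1})=(\Tr B_T^{-1})^2$, while $\pi_2,\pi_3$ have coset-type $(2)$ with $\Tr_{\pi_2}(B_T^{-1})=\Tr_{\pi_3}(B_T^{-1})=\Tr(B_T^{-2})$. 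Writing $a(z),b(z)$ for the order-$2$ orthogonal Weingarten values on coset-types $(1,1),(2)$, a direct computation over the three pairings shows that $\sum_{\rho\in M_4}\Wg^{\mathrm{O}}(\sigma^{-1}\rho;T,-q)$ is independent of $\sigma$ and, through \eqref{eq:double-orthogonalWg}, factors as $(a(T)+2b(T))(a(-q)+2b(-q))$. Since $a(z)+2b(z)=1/(z(z+2))$ (read off from the fourth moment $E(o_{11}^4)=3/(n(n+2))$ in Proposition \ref{prop:OrthogonalWeingartenCalculus}), this sum equals $1/(T(T+2)q(q-2))$, so
\begin{equation*}
E\big((A^{-1})_{11}^2\big)=\frac{(\Tr B_T^{-1})^2+2\Tr(B_T^{-2})}{T(T+2)\,q(q-2)}.
\end{equation*}

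Forming the variance, I subtract $(E(A^{-1})_{11})^2=(\Tr B_T^{-1})^2/(T^2q^2)$; using $T-q=n+1$ the $(\Tr B_T^{-1})^2$ coefficients partially cancel, leaving
\begin{equation*}
\mathrm{Var}\big((A^{-1})_{11}\big)=\frac{2(n+3)(\Tr B_T^{-1})^2}{T^2(T+2)q^2(q-2)}+\frac{2\Tr(B_T^{-2})}{T(T+2)q(q-2)}.
\end{equation*}
I expect this cancellation to be the main obstacle: each of the two competing $(\Tr B_T^{-1})^2$-contributions, $(\Tr B_T^{-1})^2/(T(T+2)q(q-2))$ and $(\Tr B_T^{-1})^2/(T^2q^2)$, is individually of order $1$ after multiplication by $\Tr(B_T)^2=T^2(\tr B_T)^2$, and only their difference, carrying the gained numerator $2(n+3)=2T-2q+4$, is $o(1)$; so the subleading Weingarten behaviour must be tracked exactly rather than to leading order. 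Finally I would multiply the displayed variance by $\Tr(B_T)^2$, pass to normalized traces, and use $q\sim T(1-r)$ together with the Cauchy--Schwarz bound $(\tr B_T^{-1})^2\le\tr(B_T^{-2})$ to see that both terms are $O\big((\tr B_T)^2\tr(B_T^{-2})/T\big)$, which is $o(1)$ by \eqref{B-condition}. Hence $\mathrm{Var}(Q)\to0$ and \eqref{effect} follows.
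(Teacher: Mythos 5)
Your proposal is correct, and its skeleton is the paper's: identify the centering term as $E(Q)$ via the orthogonal-invariance Lemma \ref{orthoinvariance}, then kill the fluctuations by a Chebyshev argument whose second moments come from Theorem \ref{thm:real-compound-inverse}, closing with Jensen/Cauchy--Schwarz and condition \eqref{B-condition}. Where you genuinely diverge is the variance computation. The paper proves a standalone exact formula for $Var(Q)$ with general $\Sigma$ (Proposition \ref{prop:var(Q)}): it expands $E(w^{(-1)}_{i_1i_2}w^{(-1)}_{i_3i_4})$ over all index choices via three separate Weingarten sums \eqref{Wg1}--\eqref{Wg3}, sums over indices, and watches the $\Sigma$-dependence cancel algebraically. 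Your reduction $Q\stackrel{\mathcal{L}}{=}\Tr(B_T)\,(A^{-1})_{11}$ — rotating $\mathbf{u}=\Sigma^{-1/2}\mathbf{e}/\|\Sigma^{-1/2}\mathbf{e}\|$ to $\mathbf{e}_1$ by invariance of $A^{-1}$ — does not appear in the paper; it makes the $\Sigma$-independence of the law of $Q$ structural rather than an algebraic accident, and collapses the moment computation to the white, constant-index case, where only the single row sum $\sum_{\rho\in M_4}\Wg^{\mathrm{O}}(\sigma^{-1}\rho;T,-q)=1/(T(T+2)q(q-2))$ is needed. Your variance,
\begin{equation*}
\Tr(B_T)^2\left(\frac{2(n+3)(\Tr(B_T^{-1}))^2}{T^2(T+2)q^2(q-2)}+\frac{2\Tr(B_T^{-2})}{T(T+2)q(q-2)}\right),
\end{equation*}
agrees exactly with Proposition \ref{prop:var(Q)}: there $a_1=2(T-q+2)(T-1)(q+1)$ and $a_2=2Tq(T-1)(q+1)$, so the common factor $(T-1)(q+1)$ cancels against the denominator and, with $T-q+2=n+3$, the two expressions coincide. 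What the paper's longer route buys is the exact general-$\Sigma$ variance formula as an object of independent interest (reused for Corollary \ref{cor:identityB}); what yours buys is brevity and a conceptual explanation of why $\Sigma$ drops out. Two minor points: your identity $Q=\Tr(B_T)\mathbf{u}^tA^{-1}\mathbf{u}$ holds in law rather than pointwise, which suffices here because \eqref{effect} centers $Q$ by a deterministic sequence; and your side remark that each competing $(\Tr(B_T^{-1}))^2$-contribution is individually ``of order $1$'' after scaling is an understatement --- under \eqref{B-condition} alone each can be as large as $o(T)$, so the exact cancellation you track (the replacement of a generic coefficient by $2(n+3)$, losing a power of $T$) is even more essential than you claim.
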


\begin{remark}
The condition $T>n+3$ is related to Theorem \ref{thm:real-compound-inverse} in order to compute the second moment of the inverse of a compound Wishart
matrix and to obtain a formula for the variance of the difference $Q- \Tr(B_T) E(\tr((X^tB_TX)^{-1}))$.
\end{remark}

In order to prove Theorem \ref{thm:noise}, we need first to consider the following result concerning the variance of the ratio $Q$.

\begin{prop}\label{prop:var(Q)}
Let $B_T$ be a $T \times T$ real matrix and let $\widehat{\Sigma}$ be as defined in \eqref{Estimator}. If $q=T-n-1$, then for $q>2$,
\begin{equation*}\label{Var(Q)interms of q}
Var(Q)=\frac{(\Tr(B_T))^2}{T^2(T+2)(T-1)q^2(q-2)(q+1)}\left( a_1 (\Tr(B_T^{-1}))^2 + a_2 \Tr(B_T^{-2}) \right)
\end{equation*}
where
$$a_1=2T^2q-2Tq^2+2T^2+2T+2q^2-2q-4$$
and
$$a_2=Tq(2T-2q+2Tq-2).$$
\end{prop}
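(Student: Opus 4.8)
The plan is to compute $\mathrm{Var}(Q)$ directly from the definition of $Q$ in \eqref{Q} by reducing everything to second moments of the inverted compound Wishart matrix, for which Theorem~\ref{thm:real-compound-inverse} gives an explicit formula. Writing $Q = \left(\sum_{i,j} \hat\sigma^{(-1)}_{ij}\right)\big/\left(\sum_{i,j}\sigma^{(-1)}_{ij}\right)$, the denominator is deterministic, so $\mathrm{Var}(Q) = \left(\sum_{i,j}\sigma^{(-1)}_{ij}\right)^{-2}\mathrm{Var}\!\left(\sum_{i,j}\hat\sigma^{(-1)}_{ij}\right)$. First I would use the distributional identity \eqref{sigma_hat_dist}, namely $\widehat\Sigma \stackrel{\mathcal{L}}{=} \frac{1}{\Tr(B_T)}\Sigma^{1/2}X^t B_T X\,\Sigma^{1/2}$, so that $\widehat\Sigma^{-1} \stackrel{\mathcal{L}}{=} \Tr(B_T)\,\Sigma^{-1/2}(X^tB_TX)^{-1}\Sigma^{-1/2}$. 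This separates the scale parameter $\Sigma$ from the white compound Wishart factor $A = X^tB_TX$, letting me write $\sum_{i,j}\hat\sigma^{(-1)}_{ij} = \Tr(B_T)\,\mathbf{e}^t\Sigma^{-1/2}A^{-1}\Sigma^{-1/2}\mathbf{e}$, a scalar bilinear form in $A^{-1}$.

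\textbf{Reducing the variance to traces via the moment formula.} To compute $\mathrm{Var}\!\left(\sum_{i,j}\hat\sigma^{(-1)}_{ij}\right)$ I need both $E\!\left[\sum_{i,j}\hat\sigma^{(-1)}_{ij}\right]$ and $E\!\left[\left(\sum_{i,j}\hat\sigma^{(-1)}_{ij}\right)^2\right]$. Setting $v = \Sigma^{-1/2}\mathbf{e}$, the sum is $\Tr(B_T)\,v^t A^{-1} v$, and its square is $(\Tr(B_T))^2\,(v^tA^{-1}v)(v^tA^{-1}v)$, which expands as a degree-two polynomial in the entries $a^{(-1)}_{ij}$ of $A^{-1}=(X^tB_TX)^{-1}$ — precisely the $k=2$ case of Theorem~\ref{thm:real-compound-inverse} with $\Sigma = I_n$ (so the scale factor there is trivial and all the $\sigma^{(-1)}_{i_u i_v}$ in that theorem become Kronecker deltas). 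The plan is to apply Theorem~\ref{thm:real-compound-inverse} with $\Sigma=I_n$ and $k=1,2$ to express $E[v^tA^{-1}v]$ and $E[(v^tA^{-1}v)^2]$ as linear combinations of $\Tr_\sigma(B_T^-)$ weighted by the double orthogonal Weingarten values $\Wg^{\mathrm{O}}(\sigma^{-1}\rho;T,-q)$, summed over $\sigma,\rho\in M_{2k}$. A crucial simplification is that the orthogonal invariance of $A^{-1}$ (Lemma~\ref{orthoinvariance}) forces the $v$-dependence to collapse: $E[A^{-1}]=\beta I_n$ gives $E[v^tA^{-1}v]=\beta\,\lVert v\rVert^2 = \beta\sum_{i,j}\sigma^{(-1)}_{ij}$, and the second-moment tensor $E[a^{(-1)}_{ij}a^{(-1)}_{k\ell}]$ is likewise an $O(n)$-invariant tensor, hence a combination of $\delta_{ij}\delta_{k\ell}$, $\delta_{ik}\delta_{j\ell}$, $\delta_{i\ell}\delta_{jk}$. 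Contracting against $v_iv_jv_kv_\ell$ therefore produces only $(\sum_{i,j}\sigma^{(-1)}_{ij})^2$ and $\lVert v\rVert^4$-type scalars, so all the geometry of $v$ (and hence of $\Sigma$) factors out cleanly and cancels against the denominator.

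\textbf{Extracting the explicit coefficients.} For $k=2$ there are $|M_4|=3$ pair partitions, so the sums over $\sigma,\rho\in M_4$ have only nine terms; the values $\Tr_\sigma(B_T^-)$ reduce to the two invariants $(\Tr(B_T^{-1}))^2$ and $\Tr(B_T^{-2})$ according to whether the coset-type of $\sigma$ is $(1,1)$ or $(2)$. I would assemble the $3\times 3$ table of double Weingarten values $\Wg^{\mathrm{O}}(\sigma^{-1}\rho;T,-q)$ — these are explicit rational functions of $T$ and $q$ obtainable from the convolution identity \eqref{eq:double-orthogonalWg} together with the known one-parameter orthogonal Weingarten values on $S_4$ — and then take the difference $E[(\,\cdot\,)^2]-(E[\,\cdot\,])^2$ to get the variance. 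The denominator factors $T^2(T+2)(T-1)q^2(q-2)(q+1)$ and the polynomials $a_1,a_2$ in the statement are exactly what emerges after clearing common denominators among these rational functions; the condition $q>2$ is precisely what guarantees the relevant Weingarten denominators are nonzero (matching the hypothesis $q\ge 2k-1=3$ of Theorem~\ref{thm:real-compound-inverse} for $k=2$).

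\textbf{Expected main obstacle.} The conceptual steps — distributional reduction, invariance-driven cancellation of $\Sigma$, and reduction to $M_4$ — are routine. The genuine labor, and the step most prone to error, is the explicit bookkeeping of the double orthogonal Weingarten matrix on $S_4$: correctly identifying $\sigma^{-1}\rho$ and its coset-type for each of the nine pairs, substituting parameters $(z,w)=(T,-q)$ into \eqref{eq:double-orthogonalWg}, and combining the resulting rational functions so that the leading cancellations occur and the numerators collapse to the stated closed forms $a_1$ and $a_2$. I expect verifying that the cross terms in $E[(v^tA^{-1}v)^2]-(E[v^tA^{-1}v])^2$ conspire to leave exactly these two invariants (with no residual dependence on higher traces of $B_T^-$) to be the delicate part, and I would double-check it by specializing to $B_T=I_T$ — where $\widehat\Sigma$ is the MLE Wishart estimator and the variance must match the classical Wishart computation — as a consistency test.
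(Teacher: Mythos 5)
Your proposal is correct and follows essentially the same route as the paper: both reduce $\mathrm{Var}(Q)$ to the $k=1$ and $k=2$ moment formulas of Theorem~\ref{thm:real-compound-inverse}, evaluate the $3\times 3$ table of double orthogonal Weingarten values on $M_4$ at parameters $(T,-q)$, and exploit the fact that every index pattern contracts to $\bigl(\sum_{i,j}\sigma^{(-1)}_{ij}\bigr)^2$ so that $\Sigma$ cancels against the denominator. The only difference is cosmetic bookkeeping: you factor out $\Sigma^{-1/2}$ and contract the white matrix $(X^tB_TX)^{-1}$ against $v=\Sigma^{-1/2}\mathbf{e}$, whereas the paper applies the theorem with the scale parameter $\Sigma$ left in place and sums the resulting $\sigma^{(-1)}$ products over all indices.
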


\begin{proof}[Proof of Proposition \ref{prop:var(Q)}]
\begin{eqnarray*}
Var(Q)&=&\frac{1}{(\sum\limits_{i,j=1}^n\sigma_{ij}^{(-1)})^{2}}\left(
E((\sum\limits_{i,j=1}^n\hat{\sigma}_{ij}^{(-1)})^2)-(E(\sum\limits_{i,j=1}^n\hat{\sigma}_{ij}^{(-1)}))^2 \right)\\
&=&\frac{1}{(\sum\limits_{i,j=1}^n\sigma_{ij}^{(-1)})^{2}}\left( \sum\limits_{i_1,i_2, i_3, i_4=1}^n E(
\hat{\sigma}^{(-1)}_{i_1i_2}\hat{\sigma}^{(-1)}_{i_3i_4})-\left(\sum\limits_{i,j=1}^n E(\hat{\sigma}_{ij}^{(-1)})\right)^2 \right)\\
\end{eqnarray*}
Substitute from \eqref{Estimator} to get
\begin{equation}\label{Var(Q)}
Var(Q)=\frac{(\Tr(B_T))^2}{(\sum\limits_{i,j=1}^n\sigma_{ij}^{(-1)})^{2}}\left( E(\sum\limits_{i_1,i_2, i_3, i_4=1}^n
w^{(-1)}_{i_1i_2}w^{(-1)}_{i_3i_4})-(E(\sum\limits_{i,j=1}^n w_{ij}^{(-1)}))^2 \right)
\end{equation}
where $W=(w_{ij})$ is an $n \times n$ compound Wishart matrix with scale parameter $\Sigma$ and shape parameter $B$.
By applying Theorem \ref{thm:real-compound-inverse}, we get
\begin{eqnarray*}
E(w_{ij}^{(-1)})&=& (-1) \sum_{\sigma, \rho \in M_{2}}
 \Tr_{\sigma} (B_T^{-1}) \Wg^{\mathrm{O}}(\sigma^{-1} \rho;T,-q)
\prod_{\{u,v \} \in \rho} \sigma^{(-1)}_{i_u i_v}\\
&=& (-1) \Tr(B_T^{-1}) \Wg^{\mathrm{O}}(\{1,2\};T,-q) \sigma^{(-1)}_{ij}
\end{eqnarray*}
where $q=T-n-1\geq 1$.
By using the values of $\Wg$ in \cite{collins-matsumoto}, we get
\begin{equation}\label{part1}
E(w_{ij}^{(-1)})= \frac{1}{Tq}\Tr(B_T^{-1}) \sigma^{(-1)}_{ij}.
\end{equation}
By applying Theorem \ref{thm:real-compound-inverse} again, then for $q\geq 3$ we get
\begin{multline} \label{second_moment}
E(w^{(-1)}_{i_1i_2}w^{(-1)}_{i_3i_4})=\sum_{\rho \in M_{4}}\big((\Tr(B_T^{-1}))^2 \Wg^{\mathrm{O}}(\rho;T,-q)+ \Tr(B_T^{-2})
[\Wg^{\mathrm{O}}(\pi_1\rho;T,-q)+\\ \Wg^{\mathrm{O}}(\pi^{-1}_2 \rho;T,-q)]\big) \prod_{\{u,v \} \in \rho} \sigma^{(-1)}_{i_u i_v}
\end{multline}
where $\pi_1=\{\{1,3\},\{2,4\}\}$ and $\pi_2=\{\{1,4\},\{2,3\}\}$.\\
From direct computations using \eqref{eq:double-orthogonalWg} and the values of $\Wg$ in \cite{collins-matsumoto}, we obtain the following equations:
\begin{multline}\label{Wg1}
\sum_{\rho \in M_{4}} \Wg^{\mathrm{O}}(\rho;T,-q)\prod_{\{u,v \} \in \rho} \sigma^{(-1)}_{i_u i_v}= \frac{1}{T(T+2)(T-1)q(-q+2)(q+1)}\\
\left(((T+1)(-q+1)+2)\sigma^{(-1)}_{i_1i_2}\sigma^{(-1)}_{i_3i_4}+(q-T-1)\sigma^{(-1)}_{i_1i_3}\sigma^{(-1)}_{i_2i_4}+(q-T-1)\sigma^{(-1)}_{i_1
i_4}\sigma^{(-1)}_{i_2 i_3}\right),
\end{multline}
\begin{multline}\label{Wg2}
\sum_{\rho \in M_{4}} \Wg^{\mathrm{O}}(\pi_1 \rho;T,-q)\prod_{\{u,v \} \in \rho} \sigma^{(-1)}_{i_u i_v}= \frac{1}{T(T+2)(T-1)q(-q+2)(q+1)}\\
\left((T+1)(q-T-1)\sigma^{(-1)}_{i_1i_2}\sigma^{(-1)}_{i_3i_4}+((T+1)(-q+1)+2)\sigma^{(-1)}_{i_1i_3}\sigma^{(-1)}_{i_2i_4}+(q-T-1)\sigma^{(-1)}_{i_1
i_4}\sigma^{(-1)}_{i_2 i_3}\right),
\end{multline}
and,
\begin{multline}\label{Wg3}
\sum_{\rho \in M_{4}} \Wg^{\mathrm{O}}(\pi^{-1}_2 \rho;T,-q)\prod_{\{u,v \} \in \rho} \sigma^{(-1)}_{i_u i_v}= \frac{1}{T(T+2)(T-1)q(-q+2)(q+1)}\\
\left((q-T-1)\sigma^{(-1)}_{i_1i_2}\sigma^{(-1)}_{i_3i_4}+(q-T-1)\sigma^{(-1)}_{i_1i_3}\sigma^{(-1)}_{i_2i_4}+((T+1)(-q+1)+2)\sigma^{(-1)}_{i_1
i_4}\sigma^{(-1)}_{i_2 i_3}\right).
\end{multline}
Substitute from \eqref{Wg1}, \eqref{Wg2}, and \eqref{Wg3} into \eqref{second_moment} to obtain
\begin{equation}\label{part2}
E(w^{(-1)}_{i_1i_2}w^{(-1)}_{i_3i_4})=\frac{1}{T(T+2)(T-1)q(q-2)(q+1)}\left((\Tr(B_T^{-1}))^2 I_1 + \Tr(B_T^{-2})I_2 \right),
\end{equation}
where $q>2$ and
$$I_1=\left(((T+1)(q-1)-2)\sigma^{(-1)}_{i_1i_2}\sigma^{(-1)}_{i_3i_4}+(T-q+1)\sigma^{(-1)}_{i_1i_3}\sigma^{(-1)}_{i_2i_4}+(T-q+1)\sigma^{(-1)}_{i_1
i_4}\sigma^{(-1)}_{i_2 i_3}\right),$$
and
$$I_2=\left(2(T-q+1)\sigma^{(-1)}_{i_1i_2}\sigma^{(-1)}_{i_3i_4}+(Tq-2)\sigma^{(-1)}_{i_1i_3}\sigma^{(-1)}_{i_2i_4}+(Tq-2)\sigma^{(-1)}_{i_1
i_4}\sigma^{(-1)}_{i_2 i_3}\right).$$
By substituting from \eqref{part1} and \eqref{part2} into \eqref{Var(Q)}, the proof is complete.
\end{proof}

If $B=I_T$, then $\widehat{\Sigma}$ in \eqref{Estimator} is the MLE of the covariance matrix $\Sigma$. For this case, Proposition \ref{prop:var(Q)}
reduces to the following interesting corollary.

\begin{coro}\label{cor:identityB}
Let $\widehat{\Sigma}$ be as defined in \eqref{Estimator}. If $B_T=I_T$, then for $q>2$
\begin{equation}\label{var(Q)for_identityB}
Var(Q)=\frac{2T^2}{q^2(q-2)}.
\end{equation}
\end{coro}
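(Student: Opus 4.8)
The plan is to specialize the general variance formula of Proposition \ref{prop:var(Q)} to the case $B_T = I_T$ by direct substitution, and then to simplify the resulting rational expression. First I would record the three trace quantities that enter the formula. Since $I_T^{-1} = I_T^{-2} = I_T$, we have $\Tr(B_T) = T$, hence $(\Tr(B_T))^2 = T^2$; likewise $(\Tr(B_T^{-1}))^2 = T^2$ and $\Tr(B_T^{-2}) = T$. Substituting these into the formula of Proposition \ref{prop:var(Q)} collapses the content of the parenthesis to $a_1 T^2 + a_2 T$, while the prefactor $(\Tr(B_T))^2 = T^2$ sits in the numerator over the unchanged denominator $T^2(T+2)(T-1)q^2(q-2)(q+1)$.

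The next step is to expand $a_1 T^2 + a_2 T$ from the explicit polynomials $a_1 = 2T^2q - 2Tq^2 + 2T^2 + 2T + 2q^2 - 2q - 4$ and $a_2 = Tq(2T - 2q + 2Tq - 2)$. After collecting like terms, the several contributions proportional to $q^2$ (namely $-2T^3q^2$ against $+2T^3q^2$, and $+2T^2q^2$ against $-2T^2q^2$) cancel in pairs, and one is left with the comparatively simple expression $2T^2\bigl(T^2 q + T^2 + Tq + T - 2q - 2\bigr)$.

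The one nonroutine point — the hard part — is to recognize that the bracketed factor is not generic but fully factorizable: grouping it as $T^2(q+1) + T(q+1) - 2(q+1)$ exhibits the common factor $(q+1)$, and the remaining quadratic $T^2 + T - 2$ splits as $(T+2)(T-1)$. This yields $a_1 T^2 + a_2 T = 2T^2(q+1)(T+2)(T-1)$. Once this factorization is in hand, the final step is purely mechanical: the factors $T^2$, $(T+2)$, $(T-1)$, and $(q+1)$ each cancel against their counterparts in the denominator, leaving exactly $\frac{2T^2}{q^2(q-2)}$, which is the claimed identity. The hypothesis $q > 2$ is exactly what keeps the surviving denominator $q^2(q-2)$ nonzero, so no further restriction is needed.
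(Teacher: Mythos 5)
Your proposal is correct and takes essentially the same route as the paper, which obtains Corollary \ref{cor:identityB} precisely by specializing the formula of Proposition \ref{prop:var(Q)} to $B_T=I_T$ (where $\Tr(B_T)=\Tr(B_T^{-1})=\Tr(B_T^{-2})=T$). Your explicit algebra — the pairwise cancellation of the $q^2$ terms and the factorization $a_1T^2+a_2T=2T^2(q+1)(T+2)(T-1)$ — checks out and supplies the simplification the paper leaves implicit.
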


\begin{remark}
Note that if $B=I_T$, then Corollary \ref{cor:identityB} implies that as $n,T\rightarrow \infty$ such that $\frac{n}{T}\rightarrow r \ (r<1)$, $Var(Q)\rightarrow 0$.
\end{remark}

Now, we are ready to prove Theorem \ref{thm:noise}.

\begin{proof}[Proof of Theorem \ref{thm:noise}]
Let
$$Z_{n,T}=Q-E(\tr((\frac{1}{\Tr(B_T)}X^tB_TX)^{-1})).$$
The proof is divided into two parts. First, we show that $$E(Q)=\Tr(B_T)E(\tr((X^tB_TX)^{-1})),$$
then we will prove that for $T>n+3$
$$Var(Z_{n,T})\rightarrow 0 \quad
\text{as} \quad n,T \rightarrow \infty \quad \text{such that} \quad n/T \rightarrow r<1.$$
For the first part, apply Proposition \ref{theo1} to (\ref{Q}) to get:
\begin{equation}\label{E(Q)}
E(Q)=\frac{E(\Tr(\widehat{\Sigma}^{-1}))}{\Tr(\Sigma^{-1})}.
\end{equation}
From (\ref{Estimator}), we have
\begin{eqnarray*}
E(Q)&=&\frac{\Tr(B_T) E(\Tr(\Sigma^{-\frac{1}{2}} (X^t B_T X)^{-1} \Sigma^{-\frac{1}{2}}))}{\Tr(\Sigma^{-1})}\\
&=&\frac{\Tr(B_T)\Tr(\Sigma^{-1} E((X^t B_T X)^{-1}))}{\Tr(\Sigma^{-1})}.
\end{eqnarray*}
Since $X^tB_TX$ is orthogonally invariant then by Lemma \ref{orthoinvariance}, we obtain
\begin{equation*}
E(Q)=\frac{\beta \Tr(B_T) \Tr(\Sigma^{-1})}{\Tr(\Sigma^{-1})},
\end{equation*}
where $\beta=E(tr((X^tB_TX)^{-1}))$ which prove that $E(Z_{n,T})=0$.
This concludes the first part of the proof.\\
To complete the proof of the theorem, it is enough to show that for $T>n+3$ and as $T, n\rightarrow \infty$ such that $\frac{n}{T}\rightarrow r
(r<1)$,\ $Var(Q) \rightarrow 0$.
Suppose that
\begin{equation}\label{B-condition1}
(\tr(B_T))^2\tr(B_T^{-2})o(T).
\end{equation}
By Jensen's inequality,
\begin{equation}\label{jensen-inequality}
(\tr(B_T^{-1}))^2\leq \tr(B_T^{-2}).
\end{equation}
From \eqref{B-condition1} and \eqref{jensen-inequality}, we get
\begin{equation}\label{B-condition2}
(\tr(B_T))^2(\tr(B_T^{-1}))^2=o(T).
\end{equation}
Since $q=T-n-1$ then using Proposition \ref{prop:var(Q)} and for $T-n>3$, $Var(Q)$ can be written in terms of $n$ and $T$ as follows:
\begin{equation}\label{Var(Q)interms of n}
Var(Q)=\frac{(\Tr(B_T))^2\cdot(a^*_1 (\Tr(B_T^{-1}))^2 + a^*_2 \Tr(B_T^{-2}))}{T^2(T+2)(T-1)\cdot S(T,n)}
\end{equation}
where
$$a^*_1=2(T^2n-Tn^2+3T^2+n^2-4Tn-3T+3n),$$
$$a^*_2=2T^4-4T^3n+2T^2n^2-2Tn^2+6T^2n-4T^3+2T^2-2Tn.$$
and,
\begin{eqnarray*}
S(T,n)&=&T^4-4T^3n+6T^2n^2-4Tn^3+n^4-5T^3+15T^2n-15Tn^2+5n^3\\
&+& 7T^2-14Tn+7n^2-3T+3n.
\end{eqnarray*}
Let $R=\frac{n}{T}$ then, from \eqref{Var(Q)interms of n} and for $T>n+3$,
\begin{multline}\label{Var(Q)interms of R}
Var(Q)=\frac{(\tr(B_T))^2\big( a^{**}_1 (\tr(B_T^{-1}))^2 + a^{**}_2 \tr(B_T^{-2}) \big)}{T^3(T+2)(T-1)((1-R^3)T^3-5(1-R)^2T^2+7(1-R)T-3)}
\end{multline}
where
$$a^{**}_1=2T^5(R T^2-(3-R)T-3)$$
and
$$a^{**}_2=2T^5(T^2(1-R)-(2-R)T+1).$$
From \eqref{B-condition1}, \eqref{B-condition2} and \eqref{Var(Q)interms of R},
$$Var(Q)\rightarrow 0 \quad
\text{as} \quad n,T \rightarrow \infty \quad \text{such that} \quad n/T \rightarrow r<1.$$
\end{proof}

\begin{remark}
For $T>n+3$\ and for the case $r<1$, the convergence \eqref{effect} in Theorem \ref{thm:noise} holds if and only if $B$ satisfies the condition in \eqref{B-condition}.
\end{remark}

\begin{remark}
For $T>n+3$ and for the case $r=1$, \eqref{Var(Q)interms of R} implies that as $n, T$ tend to infinity such that $n/T\rightarrow 1$, Theorem
\ref{thm:noise} still holds if $\frac{1}{T}(\tr(B_T))^2\tr(B_T^{-2})$ converges to $0$ faster than the convergence of $1-(\frac{n}{T})^3$ to $0$.
Under this condition, our simulation shows that this result still works  for $T>n$.
\end{remark}

\begin{remark}
For the case $T \leq n$, we need to compute the moments of the inverted Wishart matrices in this case which is beyond the scope of this paper.
\end{remark}

\begin{remark}
Let $B_T$ be a diagonal matrix whose main diagonal entries $b_{ii}=e^{-i}$ for $i=1,\dots,T$. For this $B_T$, the condition \eqref{B-condition}q is not
satisfied and our simulation shows that Theorem \ref{thm:noise} is not valid too. On the other hand side, if $B_T$ is the diagonal matrix such that
$b_{ii}=i$\ for\ $(i=1,\dots,T)$ then, condition (\ref{B-condition}) is satisfied and Theorem \ref{thm:noise} holds too.
\end{remark}

By Theorem \ref{thm:noise}, to know the asymptotic value of $Q$ we need to study the asymptotic behavior of the term $\Tr(B)\tr((X^tB_TX)^{-1})$. As
shown in \cite{mardya} (page $68$), the matrix $X^tB_TX$ is a weighted sum of independent Wishart matrices and the weights are the eigenvalues of the
matrix $B_T$. So, the distribution of the matrix $X^tB_TX$ depends essentially on the eigenvalues of $B_T$. By applying Theorem
\ref{thm:real-compound-inverse} to Theorem \ref{thm:noise}, we obtain the following corollary

\begin{coro}\label{asymptotic_behavior_of_Q}
Let $B_T$ be a $T \times T$ real matrix and let $\widehat{\Sigma}$ be as defined in (\ref{Estimator}). If $T>n+3$, and $\lim\limits_{T\rightarrow
\infty}\frac{1}{T}(\tr(B_T))^2\tr(B_T^{-2})=0$ then, as $T$ and $n$ tend to infinity such that $\frac{n}{T}\rightarrow r<1$ we have
\begin{equation} \label{effect}
Q-\frac{\Tr(B_T)\Tr(B_T^{-1})}{T(T-n-1)} \stackrel{\mathbf{\mathrm{P}}}{\longrightarrow}0.
\end{equation}
\end{coro}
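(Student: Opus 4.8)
The plan is to obtain the corollary as a direct specialization of Theorem \ref{thm:noise}, reducing the problem to an explicit evaluation of the deterministic term $\Tr(B_T)\,E(\tr((X^tB_TX)^{-1}))$. First I would observe that the hypothesis $\lim_{T\to\infty}\frac{1}{T}(\tr(B_T))^2\tr(B_T^{-2})=0$ is precisely the condition \eqref{B-condition}, namely $(\tr(B_T))^2\tr(B_T^{-2})=o(T)$. Together with $T>n+3$ and $n/T\to r<1$, all hypotheses of Theorem \ref{thm:noise} are met, so that $Q-\Tr(B_T)\,E(\tr((X^tB_TX)^{-1}))\stackrel{\mathbf{\mathrm{P}}}{\longrightarrow}0$. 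It therefore suffices to show that the deterministic quantity being subtracted equals $\frac{\Tr(B_T)\Tr(B_T^{-1})}{T(T-n-1)}$.

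The key step is the evaluation of $E(\tr((X^tB_TX)^{-1}))$. I would set $A=X^tB_TX$ and note that $A$ is a white compound Wishart matrix, i.e.\ $A\in\mathcal{W}(I_n,B_T)$. Applying Theorem \ref{thm:real-compound-inverse} with $k=1$ and $\Sigma=I_n$ (equivalently, reusing the first-moment computation \eqref{part1} with $\sigma^{(-1)}_{ij}=\delta_{ij}$), one finds $E(a^{(-1)}_{ij})=\frac{1}{T(T-n-1)}\Tr(B_T^{-1})\,\delta_{ij}$, where $q=T-n-1$. Summing over the diagonal gives $E(\Tr(A^{-1}))=\frac{n}{T(T-n-1)}\Tr(B_T^{-1})$, and normalizing by $n$ yields $E(\tr(A^{-1}))=\frac{\Tr(B_T^{-1})}{T(T-n-1)}$.

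Combining the two steps, the deterministic term becomes $\Tr(B_T)\,E(\tr((X^tB_TX)^{-1}))=\frac{\Tr(B_T)\Tr(B_T^{-1})}{T(T-n-1)}$, and substituting this into the convergence statement of Theorem \ref{thm:noise} yields precisely the claimed convergence $Q-\frac{\Tr(B_T)\Tr(B_T^{-1})}{T(T-n-1)}\stackrel{\mathbf{\mathrm{P}}}{\longrightarrow}0$. I expect no genuine obstacle here: the argument is a short deduction from the two earlier results, and the only delicate point is confirming the precise Weingarten constant $\frac{1}{Tq}$ entering \eqref{part1}, which has already been established in the proof of Proposition \ref{prop:var(Q)}.
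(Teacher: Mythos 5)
Your proposal is correct and matches the paper's intended derivation: the paper obtains this corollary exactly by applying Theorem \ref{thm:real-compound-inverse} (the $k=1$, $\Sigma=I_n$ first-moment formula, as in \eqref{part1}) to the deterministic term in Theorem \ref{thm:noise}, which is precisely your computation $\Tr(B_T)\,E(\tr((X^tB_TX)^{-1}))=\frac{\Tr(B_T)\Tr(B_T^{-1})}{T(T-n-1)}$. The identification of the hypothesis with condition \eqref{B-condition} and the check that $T>n+3$ guarantees $q\ge 1$ are both handled correctly.
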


\section{Applications}\label{sec:applications}

In the following, we consider the case of independent observations.

\subsection{The case where $B_T$ is an idempotent}
Let $B_T$ be an idempotent matrix i.e., $B_T=B^2_T$. If $B_T$ has rank $m\leq T$ then, $B_T$ has $m$ nonzero eigenvalues and each eigenvalue equals
one. In this case, condition \eqref{B-condition} holds and $X^tB_TX$ is a
white Wishart matrix with $m$ degrees of freedom. From Corollary \ref{asymptotic_behavior_of_Q}, as $T$ and $n$
tend to infinity and $\frac{n}{T}\rightarrow r<1$,
\begin{equation}\label{idempotent}
Q-\frac{(T-k)^2}{T(T-n-1)} \stackrel{\mathbf{\mathrm{P}}}{\longrightarrow} 0.
\end{equation}

\subsubsection{\textbf{Example: Maximum Likelihood Estimator (MLE)}}
$\widehat{\Sigma}$ is a maximum likelihood estimator of $\Sigma$ if $B_T=I_T$ in \eqref{Estimator}. For this estimator, $B_T$ is an idempotent of rank
$T$. From \eqref{idempotent}, we get the following result.

\begin{coro}\label{cor:MLE}
If $\widehat{\Sigma}$ is the MLE of $\Sigma$, then
as $T$ and $n$ tend to infinity such that $\frac{n}{T}\rightarrow r<1$, we have
$$Q \stackrel{\mathbf{\mathrm{P}}}{\longrightarrow} 1/(1-r).$$
\end{coro}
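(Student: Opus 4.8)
The plan is to specialize the idempotent regime already established in \eqref{idempotent} to the choice $B_T = I_T$; essentially no new analysis is required, since the variance estimate of Proposition \ref{prop:var(Q)} and the convergence in probability of Theorem \ref{thm:noise} have already done the work, and the corollary reduces to evaluating a deterministic limit. First I would verify the hypotheses for $B_T = I_T$. Because $I_T^2 = I_T$, the identity is an idempotent of full rank $T$, so the idempotent case applies; moreover $\tr(I_T) = 1$ and $\tr(I_T^{-2}) = 1$, so $(\tr(B_T))^2 \tr(B_T^{-2}) = 1 = o(T)$ and condition \eqref{B-condition} holds, putting Theorem \ref{thm:noise} and Corollary \ref{asymptotic_behavior_of_Q} in force.

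Next I would read off the deterministic centering term. Substituting the full rank $T$ into \eqref{idempotent} (equivalently, putting $\Tr(B_T) = \Tr(B_T^{-1}) = T$ into Corollary \ref{asymptotic_behavior_of_Q}) gives
\[
Q - \frac{T^2}{T(T-n-1)} \stackrel{\mathbf{\mathrm{P}}}{\longrightarrow} 0 .
\]
I would then simplify the centering as
\[
\frac{T^2}{T(T-n-1)} = \frac{T}{T-n-1} = \frac{1}{\,1 - \frac{n}{T} - \frac{1}{T}\,},
\]
which converges to $1/(1-r)$ because $n/T \to r$ and $1/T \to 0$.

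Finally I would combine the two observations: the centering is a deterministic sequence tending to $1/(1-r)$, while $Q$ minus this sequence tends to $0$ in probability, so $Q \stackrel{\mathbf{\mathrm{P}}}{\longrightarrow} 1/(1-r)$. I do not expect any genuine obstacle here; the only points deserving a moment's care are the bookkeeping of the rank parameter in \eqref{idempotent} (matching $T - k = T$, i.e. nullity $k = 0$, for the full-rank identity) and confirming that the $-1$ appearing in $T-n-1$ is lower order than $T$ and hence does not affect the limit.
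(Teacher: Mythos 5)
Your proposal is correct and follows essentially the same route as the paper: the paper also obtains Corollary \ref{cor:MLE} by noting that $B_T=I_T$ is an idempotent of full rank $T$ and specializing \eqref{idempotent} (equivalently Corollary \ref{asymptotic_behavior_of_Q} with $\Tr(B_T)=\Tr(B_T^{-1})=T$), so that the centering term $T/(T-n-1)\to 1/(1-r)$. Your additional checks (condition \eqref{B-condition} holding trivially, and the rank/nullity bookkeeping $T-k=T$) are exactly the details the paper leaves implicit.
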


This result coincides with the result of Pafka and Kondor in \cite{pafka-kondor}.
\begin{figure} \label{fig2}
\centering
\includegraphics[width=120mm]{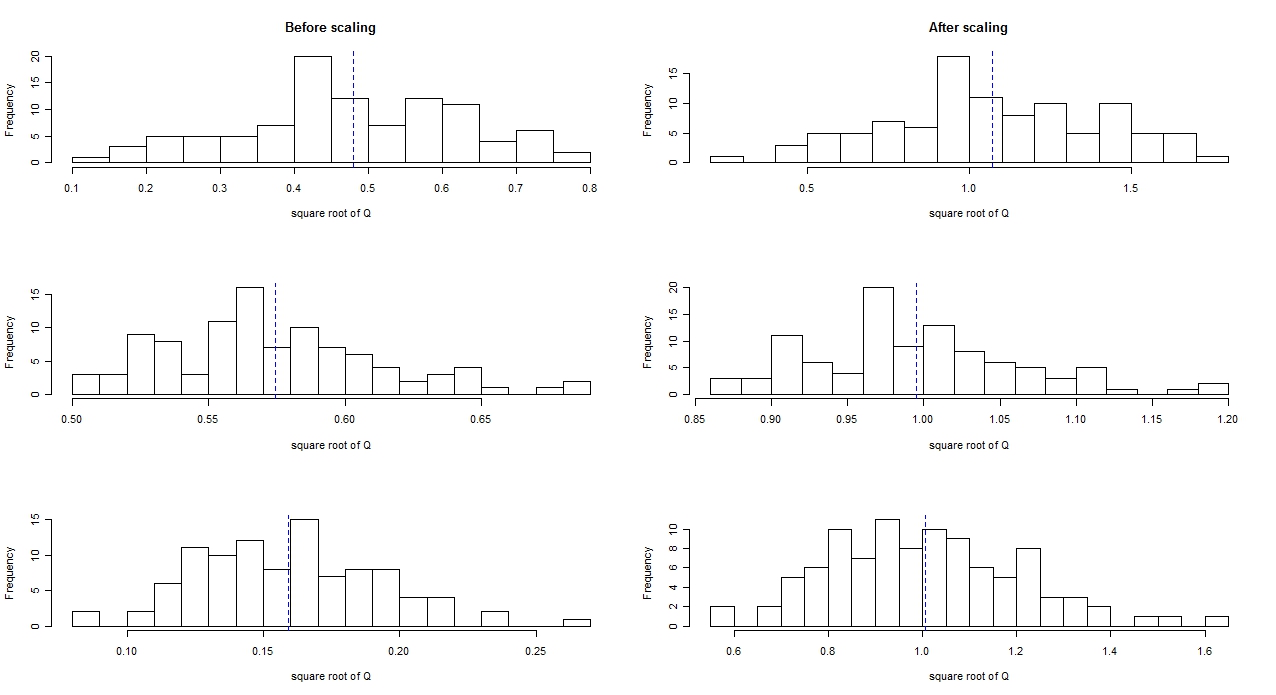}
\caption{\scriptsize{The figure illustrates the ratio between the Predicted and the True risks for the MLE before and after scaling by applying
Corollary \ref{cor:MLE}. The left side of the figure represents the ratio between the two risks before scaling while the graphs on the right hand side of the
figure describe the histogram of the ratio between the risks after applying Corollary \ref{cor:MLE}. The middle part of the
figure illustrates the ratio between the risks when $n=200$ and $T=250$. In the upper part of the figure, we focus on the case of small values of $n$
and $T$ ($n=20, T=25$) while in the lower graphs, we choose $n$ and $T$ with close values ($n=390$ and $T=400$). The mean of the ratio between the
Predicted and the True risks, represented by a dotted line in each histogram, shows a valuable improvement in estimating the Predicted risk after
scaling the Predicted risk using Corollary \ref{cor:MLE}.}}
\end{figure}

To
simulate this case, we randomly choose a
value for $\Sigma$. Using this value, we compute the True risk. Second, we generate a
set of observations from the distribution $N(0,\Sigma)$ and estimate $\mu$ and $\Sigma$ using these observations. Finally, we compute the Predicted
risk using the MLE covariance $\widehat{\Sigma}$ and compare the Predicted and the True risks. As shown in Figure ($2$), we get a remarkable improvement in estimating the risk for MLE after scaling the Predicted risk using the factor
$\frac{1}{\sqrt{1-\frac{n}{T}}}$ in Corollary \ref{cor:MLE}. The figure illustrates the ratio between the Predicted and the True risks before and after applying Corollary \ref{cor:MLE}. The dotted line in each histogram represents the mean of the ratio between the two risks. For the middle graphs of the
figure, we take $n=200$ and $T=250$ and for these values the mean of the ratio between the risks before and after scaling equals $0.575$ and $0.996$,
respectively which shows a remarkable improvement in computing the Predicted risk.

To study the validity of the Scaling technique for small values of $n,T$, we take $n=20$ and $T=25$ and as shown in the upper graphs of Figure $(2)$,
the mean of the ratio between the risks before and after scaling is $0.464$ and $1.037$, respectively. So, the Scaling technique is still valid for small dimensions and small observations situations.\\
In the lower graphs of Figure $(2)$, we choose closed values for $n$ and $T$ ($n=390$ and $T=400$) and the mean of the ratio between the risks equals
$0.159$ and $1.007$ before and after scaling, respectively. From the simulations, we conclude that for the MLE, the Scaling technique is a real
improvement in estimating the risk. Also, note that the reduction in the standard deviation of the ratio of the Predicted and the True risks from the upper graph to the middle graph as $n$ and $T$ increases from $n=20$ and $T=25$ to $n=200$ and $T=250$. In theory, the standard deviation goes to zero an $n$ and $T$ tend to infinity such that $n/T \rightarrow r \ (r<1)$ by Corollary \ref{cor:MLE}.

\subsection{When the Expected Returns are unknown} The unbiased estimator of the covariance matrix is called the sample covariance matrix and is given by
$$\widehat{\Sigma}=\frac{1}{T-1}Y^tY.$$
The sample covariance estimator can be obtained from \eqref{Estimator} by considering the entries of the matrix $B$ as follows:
$$b_{ii}=1-\frac{1}{T} \ \text{for} \ (i=1,\dots,T)  \quad \text{and} \quad b_{ij}=-\frac{1}{T} \ \text{for} \ (1<i<j<T).$$
In this case, $B$ is an idempotent of rank $T-1$. In \cite{elkaroui}, El-Karoui shows that the asymptotic behavior of the noise resulting from
estimating the covariance matrix using the sample covariance estimator (with unknown expected means of the returns) is
$\frac{1}{\sqrt{1-\frac{n-1}{T-1}}}$ which still coincides with our result in \eqref{idempotent} although in our case we assume the
returns are centered. This similarity between the two cases is due to the independence between the estimators $\hat{\mu}$ and $\widehat{\Sigma}$. To
simulate this case, we randomly choose values for $\mu$ and $\Sigma$. Using these values, we compute the True risk. Next, we generate a
set of observations from the distribution $N(\mu,\Sigma)$ and estimate $\mu$ and $\Sigma$ using these observations. Finally, we compute the Predicted
risk using the estimators $\hat{\mu}$ and $\widehat{\Sigma}$ and compare the Predicted and the True risks. As shown in Figure ($3$), the ratio between
the scaled Predicted risk and the True risk is very close to one and there is a valuable improvement in estimating the Predicted risk after using the
Scaling technique.

In the next section, we are going to study an important estimator of the covariance matrix which plays a great role in many fields, specially in finance

\begin{figure} \label{fig3}
\centering
\includegraphics[width=120mm]{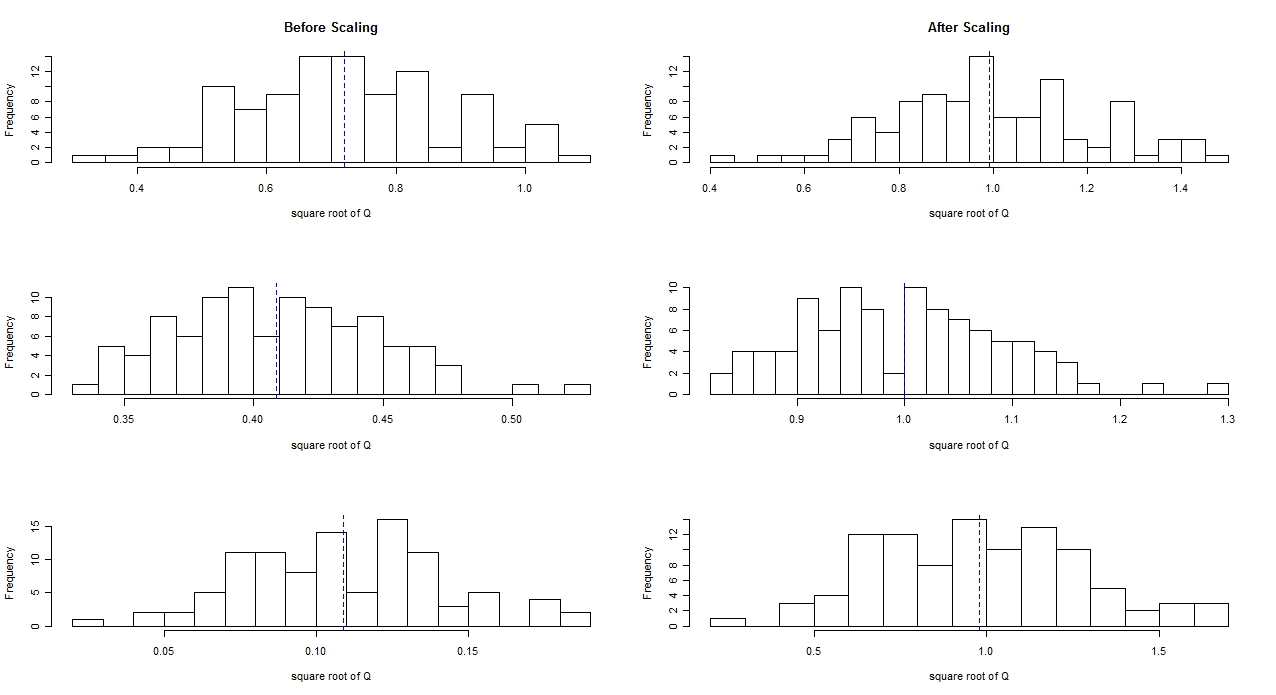}
\caption{\scriptsize{The figure describes the ratio between the \emph{Predicted} and the \emph{True} risks for the sample covariance estimator in the
case of unknown expected mean of the returns. As shown in the graphs on the right hand side, there is a real improvement in estimating the risk of the
optimal portfolio after scaling the predicted risk by the factor $\frac{1}{\sqrt{1-\frac{n-1}{T-1}}}$ according to \eqref{idempotent}. We
take $n= 10, 250, 400$ and $T= 20, 300, 405$, respectively. Comparing the graphs before scaling (on the left) and the graphs after scaling (on the
right), it is clear that the mean of the ratio between the Predicted and the True risks (represented by the dotted line in each histogram) becomes
closer to one after using the Scaling technique for small or large values of $n$ and $T$.}}
\end{figure}

\subsection{Exponentially Weighted Moving Average (EWMA)}

In the stock market, using equally weighted data doesn't accurately exhibit the current state of the market. It reflects market conditions which are perhaps no longer
valid by assigning equal weights to the most recent and the most distant observations. To express the dynamic structure of the market, it is better to
use exponentially weighted variances. Exponentially weighted data gives greater weight to the most recent observation. Thus, current market conditions are taken into consideration more accurately. The EWMA model is proposed by Bollerslev \cite{bollerslev}. Related studies (\cite{fama}, \cite{tse}) are made in the equity market and using exponentially weighted moving average techniques (weighting recent observations more heavily than older observations). In \cite{akgiray}, Akgiray shows that using EWMA techniques are more powerful than the equally weighted scheme.\\
In EWMA technique,
returns of recent observations to distant ones are weighted by multiplying each term starting from the most recent observation by an exponential decay
factor $\lambda^0, \lambda^1, \lambda^2, \dots, \lambda^j, \dots \ (0<\lambda<1$) respectively. In common, $\lambda$ is called the decay factor.
Hence, $b_{ij}=\delta_{ij}\lambda^{j-1}$ in \eqref{Estimator}, for $i,j=1 \dots T$ and we have
$$
\Tr(B_T)\Tr(B_T^{-1})= \frac{(1-\lambda^T)^2}{\lambda^{T-1}(1-\lambda)^2}.
$$
If $\lambda\rightarrow 1$, then $\lim\limits_{T\rightarrow \infty}\frac{1}{T}(\tr(B_T))^2\tr(B_T^{-2})=0$. Now, let us apply Theorem \ref{thm:noise} to
the EWMA estimator and obtain the following corollary.
\begin{coro}\label{EWMA}
Let $\widehat{\Sigma}$ be the EWMA estimator of the covariance matrix $\Sigma$ with decay factor $0<\lambda<1$. If $T>n+3$ then, as $\lambda$ tend to
$1$ and as $T,n$ tend to infinity such that $(1-\lambda)T=c$ (for some positive constant $c$) and $n/T\rightarrow r<1$, we have
$$
Q\stackrel{\mathbf{\mathrm{P}}}{\longrightarrow} (e^c-1)^2/c^2(1-r)e^c.
$$
\end{coro}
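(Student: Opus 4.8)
The plan is to read off the limit of $Q$ directly from Corollary \ref{asymptotic_behavior_of_Q}, whose hypothesis I must first verify and whose deterministic correction term I must then evaluate in the prescribed scaling regime. For the EWMA weights the matrix $B_T=\diag(\lambda^0,\lambda^1,\dots,\lambda^{T-1})$ is diagonal, so every trace I need is a finite geometric sum; explicitly
\begin{equation*}
\Tr(B_T)=\frac{1-\lambda^T}{1-\lambda},\qquad
\Tr(B_T^{-1})=\frac{1-\lambda^T}{\lambda^{T-1}(1-\lambda)},\qquad
\Tr(B_T^{-2})=\frac{1-\lambda^{2T}}{\lambda^{2(T-1)}(1-\lambda^2)}.
\end{equation*}
The first step is to check that $B_T$ satisfies the hypothesis $\frac{1}{T}(\tr(B_T))^2\tr(B_T^{-2})\to 0$ of Corollary \ref{asymptotic_behavior_of_Q} in the regime $(1-\lambda)T=c$. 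Writing $1-\lambda=c/T$, I would use $\lambda^T=(1-c/T)^T\to e^{-c}$ (and likewise $\lambda^{T-1}\to e^{-c}$, $\lambda^{2T}\to e^{-2c}$, $1-\lambda^2=(1-\lambda)(1+\lambda)\sim 2c/T$) to see that $(\tr(B_T))^2=(1-\lambda^T)^2/c^2$ and $\tr(B_T^{-2})$ each converge to finite constants; hence their product divided by $T$ is $O(1/T)$ and tends to $0$. This licenses the application of the corollary.

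Next I would apply Corollary \ref{asymptotic_behavior_of_Q}, which gives
\begin{equation*}
Q-\frac{\Tr(B_T)\Tr(B_T^{-1})}{T(T-n-1)}\stackrel{\mathbf{\mathrm{P}}}{\longrightarrow}0,
\end{equation*}
so it remains only to evaluate the deterministic sequence. Combining the closed forms above yields $\Tr(B_T)\Tr(B_T^{-1})=(1-\lambda^T)^2/(\lambda^{T-1}(1-\lambda)^2)$, and therefore
\begin{equation*}
\frac{\Tr(B_T)\Tr(B_T^{-1})}{T(T-n-1)}
=\frac{(1-\lambda^T)^2}{\lambda^{T-1}(1-\lambda)^2\,T(T-n-1)}
=\frac{(1-\lambda^T)^2}{\lambda^{T-1}c^2\left(1-\frac{n+1}{T}\right)},
\end{equation*}
where I substituted $(1-\lambda)^2=c^2/T^2$ and cancelled the factor $T^2$ against $T(T-n-1)=T^2(1-\frac{n+1}{T})$. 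Passing to the limit with $\lambda^T\to e^{-c}$, $\lambda^{T-1}\to e^{-c}$, and $n/T\to r$ gives $(1-e^{-c})^2/(e^{-c}c^2(1-r))$, which I would simplify via $(1-e^{-c})^2=e^{-2c}(e^c-1)^2$ to the claimed value $(e^c-1)^2/(c^2(1-r)e^c)$. Since $Q$ differs from this convergent deterministic sequence by a quantity tending to $0$ in probability, the conclusion follows.

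I do not expect a genuine obstacle here: the result is essentially a calculus exercise once Corollary \ref{asymptotic_behavior_of_Q} is in hand. The only point demanding care is the bookkeeping of three simultaneous limits, namely $\lambda^{T}\to e^{-c}$ (and the distinction between $\lambda^{T-1}$ and $\lambda^{T}$, which is exactly what produces the $e^{-c}$ left in the denominator), $1-\lambda=c/T$, and $n/T\to r$; a slip in any one of them changes the final constant. Verifying the hypothesis of the corollary in this regime, rather than merely invoking $\lambda\to 1$, is the one step I would state explicitly so that the corollary may legitimately be applied.
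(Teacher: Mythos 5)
Your proposal is correct and follows essentially the same route as the paper: the paper likewise computes the geometric-sum trace identities (stating $\Tr(B_T)\Tr(B_T^{-1})=(1-\lambda^T)^2/\bigl(\lambda^{T-1}(1-\lambda)^2\bigr)$ explicitly), checks the condition $\frac{1}{T}(\tr(B_T))^2\tr(B_T^{-2})\to 0$, and then invokes Theorem \ref{thm:noise} via Corollary \ref{asymptotic_behavior_of_Q} before passing to the limit under $(1-\lambda)T=c$ and $n/T\to r$. Your writeup is in fact more careful than the paper's, which verifies the hypothesis only loosely (``if $\lambda\to 1$'') and leaves the final limit evaluation, including the distinction between $\lambda^T$ and $\lambda^{T-1}$, implicit.
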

\begin{figure} \label{fig4}
\centering
\includegraphics[width=120mm]{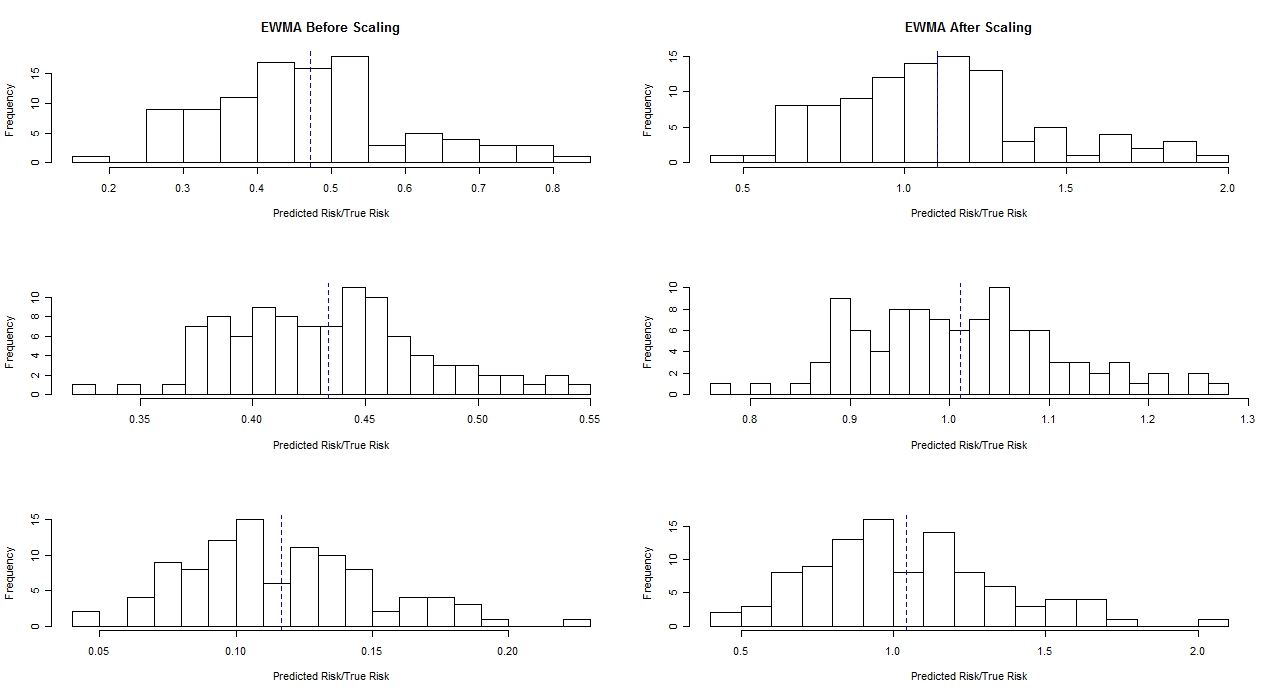}
\caption{\scriptsize{The figure describes the ratio between the \emph{Predicted} and the \emph{True} risks for the EWMA covariance estimator before and
after scaling using Corollary \ref{EWMA}. In the first row, we take small values for $n,T$, ($n=20$, $T=25$, and $\lambda=0.96$). The means of the
histograms of the upper graphs, represented by the dotted line in each histogram, equal $0.47$ (before scaling) and $1.099$ (after scaling). In the
second row, we take $n=200$, $T=250$, and $\lambda=0.996$. The means of the histograms before and after scaling are $0.43$ and $1.01$, respectively. In
the lower graphs, $n=395$, $T=400$, and $\lambda=0.9996$, and the means of the histograms equal $0.12$ (before scaling) and $1.04$ (after scaling).
Comparing the graphs before scaling (on the left) and the graphs on the right (after scaling), it is clear that the ratio between the Predicted and the
True risks becomes closer to one after using the Scaling technique.}}
\end{figure}
As shown in Figure $(4)$, for the EWMA covariance matrices, scaling the Predicted risk using Corollary \ref{EWMA} gives a great improvement to estimate
the risk of the optimal portfolio. Before scaling as illustrated in the graphs on the left hand side of Figure $(4)$, the ratio between the two risks
is far from $1$ specially for close values of $n$ and $T$ ($n=395, T=400$) as shown in the lower left graph of the figure. After scaling the Predicted
risk by the factor $\frac{exp(c)-1}{c\sqrt{(1-r)exp(c)}}$ as in Corollary \ref{EWMA}, the ratio between the Predicted and the True risks becomes very
close to $1$ as in the right hand sides graphs of the figure. For small values of $n$ and $T$, as in the upper graphs of the figure, $n=20$ and $T=25$,
the means of the histograms of the upper graphs, represented by the dotted line in each histogram, equal $0.47$ (before scaling) and $1.099$ (after
scaling). So, the Scaling technique still works and improves the estimation of the Predicted risk. Again note the reduction in the standard deviation of the ratio of the Predicted and the True risks from the upper graph to the middle graph as $n$ and $T$ increases from $n=20$ and $T=25$ to $n=200$ and $T=250$.

\section{Real Data}\label{sec:real data}

In this section, we work with real data from the stock market and observe the effect of using the scaling technique on improving the prediction of the risk of the optimal portfolio. For $30$ stocks, $n=30$, we compute the True risk using a large number ($354$) of observations. In order to compute the Predicted risk, we use only $50$ observations i.e., $T=50$.

To compute the Predicted risk, we randomly choose $50$ observations and use them to find the MLE (or EWMA) of the covariance matrix and then invert the MLE (or EWMA) and calculate the Predicted risk. After repeating this process for $100$ times, we histogram the ratio between the Predicted and the True risks before and after scaling using the result of Corollary \ref{cor:MLE} (or Corollary \ref{EWMA} in the case of EWMA covariance).

In Figure $(5)$, we illustrate the ratio between the Predicted and the True risks in the case of the MLE covariance. As shown in the upper histogram, the average of the ratio between the risks is $0.631$ before scaling. While after scaling, the average of the ratio between the risks is $0.998$ as shown in the lower histogram. This shows that using Corollary \ref{cor:MLE}, in the case of MLE covariance, admits a real improvement in estimating the risk of the optimal portfolio.

In the case of the EWMA covariance, we choose some value for the decay factor e.g. $\lambda=0.98$ and illustrate the ratio between the Predicted and the True risks before and after scaling using Corollary \ref{EWMA} as shown in Figure $(6)$. In the upper histogram, the average of the ratio between the risks is $0.64$ before scaling. While after scaling, as shown in the lower histogram, the ratio between the risks becomes $1.008$. Hence, for the EWMA covariances, using the result of Corollary \ref{EWMA} provides a better estimation of the risk of the optimal portfolio. We conclude that the scaling technique admits a good prediction of the risk of the optimal portfolio for different covariance matrices.

\begin{figure} \label{fig5}
\centering
\includegraphics[width=120mm]{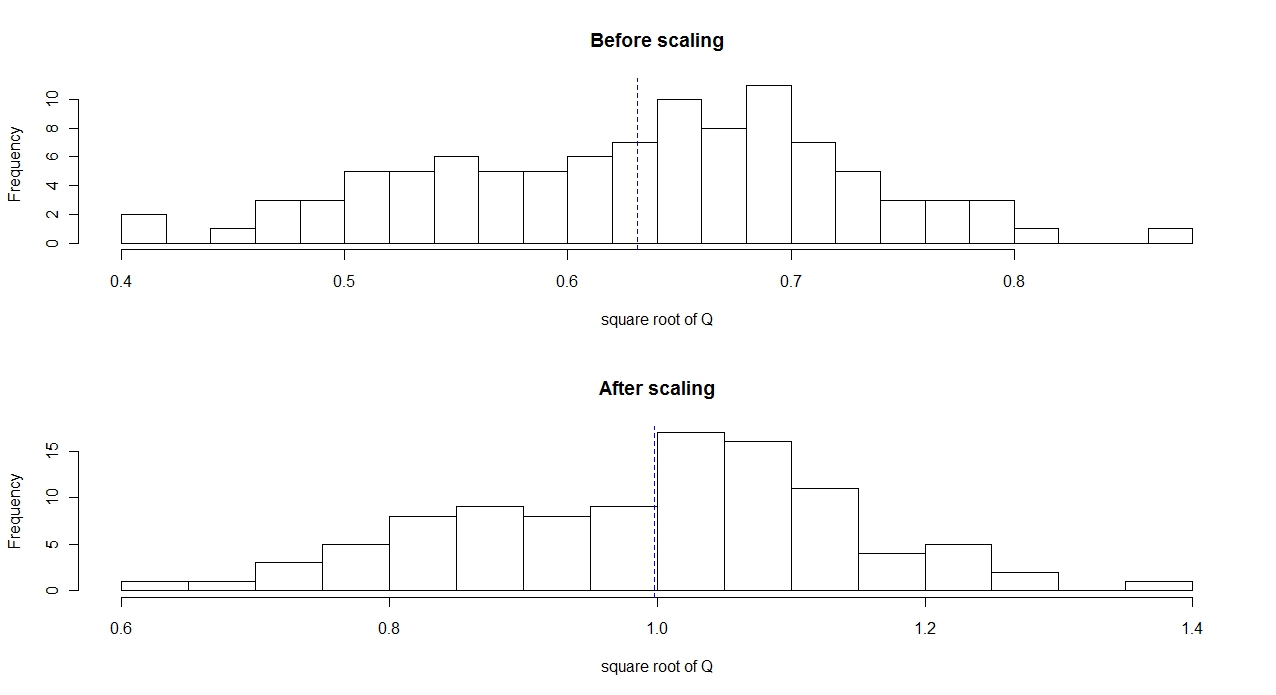}
\caption{\scriptsize{The figure describes the ratio between the \emph{Predicted} and the \emph{True} risks for the MLE covariance using real data. The upper histogram describes the ratio between the risks before scaling while the lower histogram describes the ratio after scaling using Corollary \ref{cor:MLE}. It is clear that there is a real improvement in estimating the risk of the optimal portfolio after scaling the predicted risk.}}
\end{figure}

\begin{figure} \label{fig6}
\centering
\includegraphics[width=120mm]{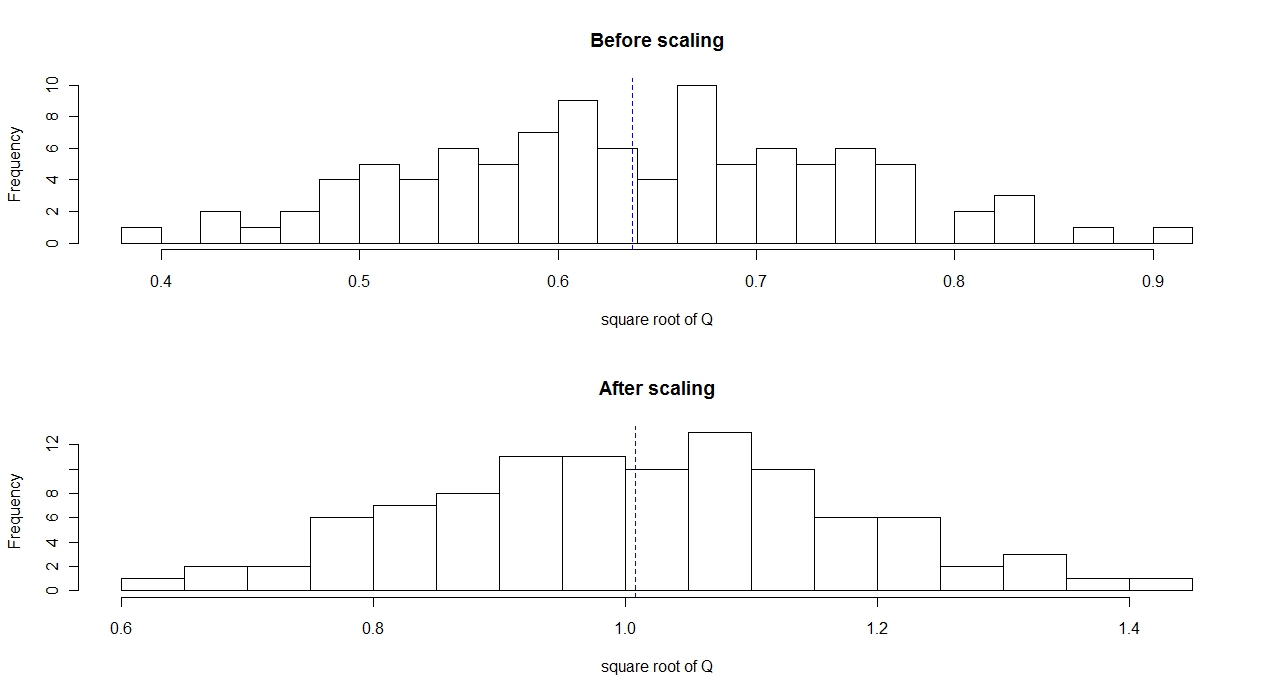}
\caption{\scriptsize{The figure describes the ratio between the \emph{Predicted} and the \emph{True} risks for the EWMA covariance using real data from the stock market. As shown in the graphs, the average of the ratio between the two risks (presented by the dotted line in each histogram) becomes closer to one after scaling in the lower graph.}}
\end{figure}

\begin{remark}
In the case of EWMA, we take different values for the decay factor $\lambda$ and in each time the ratio between the Predicted and the True risks becomes closer to one after using the Scaling technique.
\end{remark}

\section{Conclusion}\label{sec:conclusion}
For a general estimator of the covariance matrix and using our results concerning the moments of the inverse of the compound Wishart matrices, we are able to use a Scaling Technique to cancel the asymptotic effect of the noise induced by estimating the covariance matrix of the returns on the risk of an optimal
portfolio. As an application, we get a new approach on estimating the risk based on estimating the covariance matrices of stocks returns using the
exponentially weighted moving average. Simulations show a remarkable improvement in estimating the risk of the optimal portfolio using the Scaling
technique which outperforms the improvement obtained by using the Filtering technique \cite{biroli-bouchaud-potters}.

We believe that the effect of noise on computing the risk and the weights of the optimal portfolio results from estimating the inverse of the
covariance matrix (using the inverse of the estimator of the covariance matrix) not from estimating the covariance matrix itself. Improving the
estimator of the inverse of the covariance matrix is an interesting topic for our future work.

\section*{Acknowledgments}

B.C, D.McD. and N.S. were supported by NSERC discovery grants
B.C. and N.S. were supported by an Ontario's ERA grant.
B.C. was supported in part by funding from the AIMR.

The authors would like to thank M. Alvo, M. Davison, R. Kulik and S. Matsumoto for support and enlightening discussions.

\noindent
\textsc{Beno\^\i t Collins} \\
D\'epartement de Math\'ematique et Statistique, Universit\'e d'Ottawa,
585 King Edward, Ottawa, ON, K1N6N5 Canada,
WPI AIMR, Tohoku, Sendai, 980-8577 Japan
and
CNRS, Institut Camille Jordan Universit\'e  Lyon 1,
France
\verb|bcollins@uottawa.ca|

\bigskip

\noindent
\textsc{David McDonald} \\
Department of Mathematics and Statistics, University of Ottawa,
585 King Edward, Ottawa, ON, K1N6N5 Canada \\
\verb||

\bigskip

\noindent
\textsc{Nadia Saad} \\
Department of Mathematics and Statistics, University of Ottawa,
585 King Edward, Ottawa, ON, K1N6N5 Canada \\
\verb|nkotb087@uottawa.ca|

\end{document}